\documentclass{article}

\usepackage{color}
\usepackage{graphicx} 
\usepackage{mathtools}
\usepackage{amsmath,amsfonts,amssymb,amsthm}
\usepackage{ascmac}
\usepackage{lineno}
\usepackage{fancybox}
\usepackage{authblk}
\usepackage{makecell}
\usepackage[letterpaper,top=2cm,bottom=2cm,left=3cm,right=3cm,marginparwidth=1.75cm]{geometry}
\usepackage{todonotes}
\usepackage{url}
\usepackage{hyperref}
\usepackage{cite}
\usepackage{cleveref}
\usepackage{tabularx}
\usepackage{multirow} 
\usepackage{booktabs} 
\usepackage{comment}

\newtheorem{theorem}{Theorem}
\newtheorem{proposition}{Proposition}
\newtheorem{corollary}{Corollary}

\newtheorem{observation}{Observation}

\newtheorem{claim}{Claim}

\newcommand{\defmodel}[5]{
    \vspace{3mm}
    \noindent\fbox{
        \begin{minipage}{.95\columnwidth}
            #1\newline
            \textbf{Input:} #2\\
            \textbf{Strategy profile:} #3\\
            \textbf{Cost of an agent:} #4\\
            \textbf{Total cost:} #5
        \end{minipage}
    }
    \vspace{3mm}
}

\newenvironment{proofclaim}{\noindent{\em Proof of the claim.}}{\qedclaim}
\newcommand{\qedclaim}{\hfill $\diamond$ \medskip}

\newcommand{\vc}{{\rm vc}}

\newcommand{\ccMAR}{\textsc{CC-MAR}}
\newcommand{\PS}{\mathcal{P}}
\newcommand{\cost}{\mathrm{cost}}

\newcommand{\terminals}{{\mathcal{T}}}

\newcommand{\tesshu}[1]{\textcolor{purple}{{\textbf{Te: }#1}}}

\title{Game-Theoretic and Algorithmic Analyses of Multi-Agent Routing under Crossing Costs\thanks{Tesshu Hanaka is partially supported by JSPS KAKENHI Grant Numbers JP21K17707, JP22H00513, JP23H04388, JP25K03077, and JST, CRONOS, Japan Grant Number JPMJCS24K2. Nikolaos Melissinos is partially supported by Charles University projects UNCE 24/SCI/008 and PRIMUS 24/SCI/012, and by the project 25-17221S of GAČR. Hirotaka Ono is partially supported by JSPS KAKENHI Grant Numbers JP20H05967, JP22H00513, JP24K02898, JP25K03077, and JST, CRONOS, Japan Grant Number JPMJCS24K2.}}

\author[1]{Tesshu Hanaka}
\author[2]{Nikolaos Melissinos}
\author[3]{Hirotaka Ono}

\affil[1]{Kyushu University, Fukuoka, Japan\\ \texttt{hanaka@inf.kyushu-u.ac.jp}}
\affil[2]{Computer Science Institute, Faculty of Mathematics and Physics, Charles University, Prague, Czech Republic\\ \texttt{melissinos@iuuk.mff.cuni.cz}} 
\affil[3]{Nagoya University, Nagoya, Japan\\ \texttt{ono@nagoya-u.jp}}

\date{}

\begin{document}

\maketitle

\begin{abstract}
Coordinating the movement of multiple autonomous agents over a shared network is a fundamental challenge in algorithmic robotics, intelligent transportation, and distributed systems. The dominant approach, Multi-Agent Path Finding (MAPF), relies on centralized control and synchronous collision avoidance, which often requires strict synchronization and guarantees of globally conflict-free execution.
This paper introduces the Multi-Agent Routing under Crossing Cost model on mixed graphs, a novel framework tailored for asynchronous settings. In our model, instead of treating conflicts as hard constraints, each agent is assigned a path, and the system is evaluated through a cost function that measures potential head-on encounters. This ``crossing cost'', which is defined as the product of agents traversing an edge in opposite directions, quantifies the risk of congestion and delay in decentralized execution.

Our contributions are both game-theoretic and algorithmic. We model the setting as a congestion game with a non-standard cost function, prove the existence of pure Nash equilibria, and analyze the dynamics leading to them. Equilibria can be found in polynomial time under mild conditions, while the general case is PLS-complete.
From an optimization perspective, deciding whether a solution with zero crossing cost exists generalizes the Steiner Orientation problem, making the problem NP-complete. To address this hardness barrier, we design a suite of parameterized algorithms for minimizing crossing cost, with parameters including the number of arcs, edges,  agents, and structural graph measures such as vertex cover. These yield XP or FPT results depending on the parameter, offering algorithmic strategies for structurally restricted instances. 
Our framework provides a new theoretical foundation for decentralized multi-agent routing, bridging equilibrium analysis and parameterized complexity to support scalable and risk-aware coordination.
\end{abstract}

\section{Introduction}

Coordinating the movement of multiple autonomous agents over a shared network is a fundamental challenge in algorithmic robotics, intelligent transportation, and distributed systems. 
A central abstraction for this task is the Multi-Agent Path Finding (MAPF) problem, which has numerous real-world applications, e.g., in warehouse management, autonomous vehicle coordination, and multi-robot systems~\cite{DBLP:journals/aim/WurmanDM08,DBLP:conf/aaai/0001TKDKK21,DBLP:conf/aaai/Yan025}. In MAPF, the goal is to compute collision-free paths that prevent agents from simultaneously occupying the same vertex or traversing the same edge in opposite directions. The canonical MAPF formulation typically assumes centralized control, synchronized movement, and tight coordination to enforce these constraints, with the primary objective being the minimization of makespan or total cost.

In contrast, many real-world multi-agent systems, such as fleets of delivery robots, autonomous vehicles on road networks, or agents in communication-limited environments, operate with asynchronous timing and decentralized decision-making. In such settings, it is often impractical to enforce strict synchronization or guarantee globally conflict-free execution. Instead, agents may be provided with routes (that is, sequences of vertices to be followed), while the precise timing of their execution remains unspecified. Consequently, potential conflicts are better viewed not as hard constraints but as quantifiable risks arising from route interactions.

A powerful and well-studied framework for modeling such scenarios is the class of network congestion games, introduced by Rosenthal~\cite{Rosenthal1973}. In these games, each agent has a pair of vertices: a source and a destination. Each agent then selfishly selects a path between its source and destination on a network to minimize a cost function based on edge congestion, typically defined as a non-decreasing function of the number of agents traversing an edge. This formulation reflects situations where agents operate independently and may interact only partially or unpredictably during execution.
It is well-known that network congestion games are guaranteed to possess a pure-strategy Nash equilibrium~\cite{Rosenthal1973}. While finding such an equilibrium is PLS-complete in general~\cite{stoc/FabrikantPT04,AckermannRV08}, it is tractable in specific cases, such as when all agents share the same source-terminal pair~\cite{stoc/FabrikantPT04}.

In this work, we introduce a novel congestion game model, which we call \textbf{Crossing Cost Multi-Agent Routing (CC-MAR)}. Unlike traditional network congestion games, the cost is incurred not by the total number of agents on an edge, but only by those traversing it in opposite directions; only potential head-on encounters are penalized. This asymmetric cost structure reflects practical scenarios such as unidirectional flow in narrow corridors or lanes, where same-direction usage is typically non-disruptive.
Figure~\ref{fig:example} shows an example of our model.

\begin{figure}[h]
    \centering
    \includegraphics[width=0.5\linewidth]{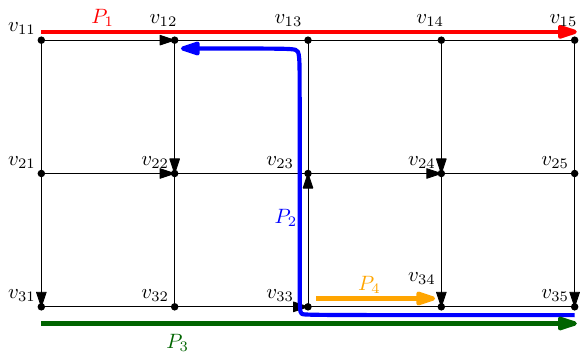}
    \caption{An example of CC-MAR. In this example, there are 4 agents with terminal pairs $(v_{11},v_{15})$, $(v_{35},v_{12})$, $(v_{31},v_{35})$, and $(v_{33},v_{34})$, respectively. 
The strategy of each agent is $P_i$ for $i\in \{1,2,3,4\}$. Path $P_1$ and path $P_2$ cross each other on the edge $\{v_{12},v_{13}\}$. Thus, the crossing cost between $P_1$ and $P_2$ is 1. The individual cost of agent 1 is also 1, as it has no other crossings.
On the other hand, the individual cost of agent 2 is 4. This is because it crosses $P_1$ on edge $\{v_{12},v_{13}\}$, crosses $P_3$ on edges $\{v_{33},v_{34}\},\{v_{34},v_{35}\}$, and crosses $P_4$ on edge $\{v_{34},v_{35}\}$. The total cost of the strategy profile is the sum of all pairwise crossing costs, which is $1 (\text{from } P_1, P_2) + 2 (\text{from } P_2, P_3) + 1 (\text{from } P_2, P_4) = 4$.}
    \label{fig:example}
\end{figure}

\begin{table*}[!t]
\centering
\caption{Summary of our contributions for the Crossing Cost Multi-Agent Routing (CC-MAR) problem.}
\label{tab:contributions}
\renewcommand{\arraystretch}{1.2}
\begin{tabularx}{\linewidth}{@{}ll >{\raggedright\arraybackslash}X@{}} 
\toprule
\textbf{Perspective} & \textbf{Problem Setting} & \textbf{Main Contribution \& Complexity} \\
\midrule
\multirow{2}{*}{\begin{tabular}[c]{@{}l@{}}\textbf{Game-Theoretic} \\ \textbf{Analysis}\end{tabular}} 
& Existence \& Stability of Equilibria & 
$\bullet$ A Nash Equilibrium (\textbf{NE}) always exists. \newline
$\bullet$ Price of Stability (\textbf{PoS}) is 1\newline
$\bullet$ Price of Anarchy (\textbf{PoA}) is unbounded. \\
\cmidrule(l){2-3}
& Finding a Nash Equilibrium & 
$\bullet$ \textbf{Poly-time} for polynomially bounded edge weights. \newline
$\bullet$ \textbf{PLS-complete} for general weights. \\
\midrule
\multirow{4}{*}{\begin{tabular}[c]{@{}l@{}}\textbf{Optimization} \\ \textbf{(Min-Cost Solution)}\end{tabular}} 
& General Graphs & \textbf{NP-complete} (Generalizes Steiner Orientation). \\
\cmidrule(l){2-3}
& \textbf{Parameterized Algorithms} & \\
& \quad w.r.t. \# of terminal pairs $k$ & \textbf{XP} (\textit{Tight due to W[1]-hardness of Steiner Orientation.}) \\
& \quad w.r.t. \# of edges $|E|$ & \textbf{FPT} \\
& \quad w.r.t. \# of arcs $|A|$ + $k$ &  \textbf{FPT}\\
& \quad w.r.t. \# of arcs $|A|$ + $diam(G)$ &  \textbf{FPT}\\
& \quad w.r.t. vertex cover number $\vc$ + $k$ & \textbf{FPT} (Unweighted case)\\
& \quad w.r.t. vertex cover number $\vc$  & \textbf{FPT} (Unweighted, no overlapping terminals case)\\
& \quad w.r.t. the size of $(V\times V)\setminus \mathcal{T}$ & \textbf{FPT} (Computing zero cost solution)\\
\bottomrule
\end{tabularx}
\end{table*}

\subsection{Our contribution}
We investigate \ccMAR{} model from both a game-theoretic and an algorithmic perspectives.
Table \ref{tab:contributions} summarizes our contributions.

From a game-theoretic perspective, we analyze the strategic behavior of agents who selfishly choose their own routes to minimize their individual crossing costs. 
In this setting, we first establish that the game always has a (pure) Nash equilibrium and prove that best-response dynamics are guaranteed to converge in at most $w_{\max}k^2m$ steps, where $w_{\max}$ is the maximum weight among edges. This result stems from our proof that any strategy profile with minimum cost is a Nash equilibrium, which implies that the Price of Stability (PoS) is 1. In stark contrast, we demonstrate an instance where the Price of Anarchy (PoA) is unbounded. 
Next, we consider the complexity of finding a Nash equilibrium, showing that it can be solved in polynomial time if the weights are bounded by a polynomial in the input size, but becomes PLS-complete otherwise.

Shifting to a centralized optimization perspective, we study the problem of finding a minimum-cost strategy profile, referred to as an optimal solution.
We first observe that our problem generalizes the Steiner Orientation problem. Specifically, deciding whether a zero-cost solution exists is equivalent to solving an instance of the Steiner Orientation problem. Since Steiner Orientation is known to be NP-complete in very restricted graph classes~\cite{hanaka2025structuralparameterssteinerorientation}, so is finding an optimal solution for CC-MAR. To circumvent this hardness, we employ the framework of parameterized complexity. 

Our analysis yields a comprehensive suite of parameterized algorithms. We first present an XP algorithm parameterized by the number of terminal pairs $k$. We next develop two FPT algorithms: one parameterized by the number of edges, and another by the number of arcs plus the number of terminal pairs. 
We then proceed by considering more restricted instances, with particular interest in unweighted graphs. Here, we establish the fixed-parameter tractability for the combined parameter of vertex cover number plus the number of terminal pairs. We also deal with even more restricted cases, such as instances without overlapping terminal vertices.

Crucially, many of our algorithmic results are provably tight, as they are constrained by existing hardness results for Steiner Orientation.
The W[1]-hardness of Steiner Orientation parameterized by the number of terminal pairs~\cite{ChitnisFS19} implies that our XP algorithm is unlikely to be improved to FPT. 
Additionally, since Steiner Orientation is W[1]-hard when parameterized by the vertex cover number~\cite{hanaka2025structuralparameterssteinerorientation}, we do not expect to achieve fixed-parameter tractability with respect to the vertex cover number alone. 

Taken together, our results provide both deep algorithmic insights and a structural understanding of multi-agent routing under asymmetric interaction costs, contributing to the design of more robust and scalable, decentralized coordination mechanisms for systems where strict synchronization is infeasible.

\subsection{Related work}
Our model is at the intersection of Multi-Agent Path Finding, Congestion Games, and Graph Orientation. In this section, we introduce related work in these lines.

\paragraph{Multi-Agent Path Finding (MAPF)}
Multi-Agent Path Finding (MAPF) has received significant attention from both theoretical and practical perspectives, with many real-world applications, e.g., in warehouse management, autonomous vehicle coordination, and multi-robot systems~\cite{DBLP:journals/aim/WurmanDM08,socs/SternSFK0WLA0KB19,DBLP:conf/aaai/0001TKDKK21,DBLP:conf/aaai/000123,DBLP:conf/aaai/0001T23,DBLP:conf/aaai/Yan025}. A significant body of work aims to overcome the computational hardness of MAPF by considering restricted instances and parameterized complexity~\cite{FioravantesKKMO25a,FioravantesKKMO24,DeligkasEGK025,EibenGK025}.
Additionally, several variations of MAPF have been introduced over the years addressing concepts like energy consumption~\cite{DeligkasEGK024}, communication constraints~\cite{TateoBRAB18,FioravantesKKMO25}, rescheduling after delays~\cite{KottingerGASL24}, and many others~\cite{SternSFK0WLA0KB19,AliY24,MaTSKK16,ZhouZR25}.
In contrast to classical MAPF, which models synchronous systems with hard collision constraints, our work focuses on \emph{asynchronous} systems with a \emph{soft} cost function. As such, CC-MAR provides a more flexible and arguably more realistic model for decentralized systems where timing is uncertain.

\paragraph{Congestion Games}

The Congestion Game is first introduced by Rosenthal~\cite{Rosenthal1973}. In a standard congestion game, players select from a set of shared resources, and the cost associated with each resource depends on the number of players using it. Rosenthal proves that every congestion game admits a pure-strategy Nash equilibrium by employing a potential function argument.

The CC-MAR model is most closely related to Network Congestion Games, where graph edges act as the resources. In this setting, the strategy for a player with a terminal pair $(s,t)$ is to choose an $s$-$t$ path. The total cost for a player is the sum of the costs of the edges in their chosen path, where an edge's cost is a non-decreasing function of the number of agents using it. A Nash equilibrium can be computed in polynomial time if all players share the same terminal pair~\cite{stoc/FabrikantPT04}, but finding one is PLS-complete in the general case~\cite{stoc/FabrikantPT04,AckermannRV08}.

\paragraph{Steiner Orientation}
The zero-cost version of CC-MAR is equivalent to the Steiner Orientation problem, where one must orient undirected edges in a mixed graph to connect a set of terminal pairs. This problem is NP-complete, even on graphs of treewidth 2~\cite{ArkinH02, hanaka2025structuralparameterssteinerorientation}.
From a parameterized perspective, its complexity is well-understood: the problem is FPT when parameterized by the number of edges and arcs, but it is W[1]-hard (though in XP) when parameterized by the number of terminal pairs or the vertex cover number~\cite{CyganKN13, DBLP:journals/algorithmica/ChitnisFS19, hanaka2025structuralparameterssteinerorientation}. These hardness results for Steiner Orientation apply directly to CC-MAR.

\section{Preliminaries}\label{sec:prelim}
\paragraph*{Notations and Problem Setting.}
Let $G=(V,E,A)$ be a mixed graph, where $E\subseteq \binom{V}{2}$ denotes the set of (undirected) edges in $G$ and $A\subseteq V\times V$ denotes the set of directed edges in $G$. An undirected edge is called just an \emph{edge}, and a directed edge is called an \emph{arc}. 
If each edge $e\in E$ has the weight $w_e$, the graph $G$ is called a \emph{weighted graph}. Throughout the paper, we assume that the edge weights are positive integers.

A sequence of vertices $v_1-\ldots-v_m$ is a path in $G$ if $i,j \in [m]$ if $i\neq j$ then $v_i \neq v_j$, and for any $i \in [m-1]$, either $\{v_i,v_{i+1}\}\in E$ or $(v_i,v_{i+1})\in A$.

We say that paths $P_1$ and $P_2$ use an edge $\{u,v\}\in E$ in opposite directions if $u$ and $v$ appear consecutively in both paths and $u$ precedes $v$ in $P_1$ if and only if $v$ precedes $u$ in $P_2$. We define the crossing cost $\cost(P_1,P_2)$ as the sum of weights $w_{uv}$ of edges $\{u,v\}\in E$ that are used in opposite directions by $P_1$ and $P_2$.

 For each edge $e = \{u,v\}\in E$, we define $\overrightarrow{e}=(u,v)$ and $\overleftarrow{e}=(v,u)$. Let
    $x_{\overrightarrow{e}}$ (resp., $x_{\overleftarrow{e}}$)  be the number of agents on $\overrightarrow{e}$ (resp., $\overleftarrow{e}$) in $\PS$. 
Without loss of generality, we assume that a path $P$ uses $\overrightarrow{e}=(u,v)$ if it uses edge $e=\{u,v\}$.
We denote by $E(P)$ the set of edges used by a path $P$.

Given a set of paths $\PS=\{P_1,\ldots,P_k\}$,
the cost of a path $P_i$ is defined as: 
\[\cost_{\PS}(P_i) = \sum_{e\in E(P_i)}w_{e}x_{\overleftarrow{e}}\]
The total cost of $\PS$ is defined as: %
\begin{align*}
 \cost(\PS) & = \frac{1}{2}\sum_{i=1}^k\sum_{e\in E(P_i)}w_{e}x_{\overleftarrow{e}} = \sum_{e\in E} w_e x_{\overrightarrow{e}}x_{\overleftarrow{e}} \\
 & = \sum_{1\le i< j\le k} \cost(P_i,P_j) = \sum_{\{u,v\}\in E}w_{uv}x_{uv}x_{vu}   
\end{align*}
\noindent
Our model is formally defined as follows.

\defmodel{Crossing Cost Multi-Agent Routing (\ccMAR )}
{$\Gamma = (G, \terminals)$ where $G=(V,E,A)$ is a (weighted) mixed graph,  $\terminals$ is a multiset that contains $k$ pairs of vertices $(s_i,t_i) \in V\times V$, $i \in [k]$.}
{The set $\PS=\{P_1,\ldots,P_k\}$ of $k$ paths $P_i = s_i-\ldots-t_i$, $i \in k$.}
{$\cost_{\PS}(P_i) = \sum_{e\in E(P_i)}w_{e}x_{\overleftarrow{e}}$}
{$\cost(\PS) =\sum_{\{u,v\}\in E}w_{uv}x_{uv}x_{vu}$}

We say $\Gamma = (G, \terminals)$ is \emph{feasible} if there exists at least one $s_i$-$t_i$ path in $G$ for each $i\in [k]$. In this paper, we assume that $\Gamma$ is feasible because we can check whether $\Gamma$ is feasible in polynomial time by computing an $s_i$-$t_i$ path for $i\in [k]$ on the directed graph obtained by replacing each edge $\{u,v\}$ in $E$ by directed arcs $(u,v)$ and $(v,u)$.
When we deal with the optimization version of \ccMAR, a strategy profile will be called a feasible solution, while the feasible solution of minimum total cost will be called an optimal solution.


\paragraph*{Parametrized Complexity.}
As we have already mentioned, we will result to parameterized complexity in order to overcome the tractability hardness of computing optimal solutions of \ccMAR. 
Parameterized complexity extends classical complexity analysis by introducing a secondary metric called the parameter. Under this framework, the problems are defined as pairs $(x,k) \in \Sigma^* \times \mathbb{N}$, where $k$ is the parameter.
The goal is to capture all the computational hardness in the parameter and provide algorithms that scale well as the input grows. 
In particular, a problem admits a fixed-parameter tractable algorithm (FPT) if it is solvable in $f(k)|x|^{O(1)}$ time (for a computable function $f$). Similarly, a problem admits a slicewise polynomial (XP) algorithm if it is solvable in $|x|^{f(k)}$ time. Problems that admit FPT algorithms belong to the complexity class FPT while problems that admit XP algorithms belong to the XP class.
Under standard computational complexity assumptions problems proven W[1]-hard (via parameterized reduction) do not belong to FPT while problems proven ParaNP-hard (via parameterized reduction) do not belong to XP. 
We refer to the monographs~\cite{CyganFKLMPPS15,downey2012parameterized,fomin2019kernelization} for a dedicated study of parameterized complexity.

\section{Game Theoretic Results}

In this section, we investigate Nash Equilibria on CC-MAR.

\subsection{Existence and convergence of a Nash equilibrium}
\begin{theorem}\label{theorem:Nash}
    A Nash equilibrium always exists, and the Nash dynamics converge with at most $w_{\max}k^2|E|$ steps,  where $w_{\max}$ is the maximum weight among edges.
\end{theorem}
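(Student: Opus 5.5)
The natural candidate is an exact potential, namely the total cost $\Phi(\PS)=\cost(\PS)=\sum_{e\in E} w_e x_{\overrightarrow{e}}x_{\overleftarrow{e}}$ itself. Because the crossing cost is symmetric and pairwise, $\cost(\PS)=\sum_{i<j}\cost(P_i,P_j)$. I will show that when a single agent $i$ deviates from $P_i$ to $P_i'$ (others fixed, giving $\PS'$), the change in total cost equals the change in agent $i$'s individual cost:
\[
\cost(\PS')-\cost(\PS)=\sum_{j\neq i}\bigl(\cost(P_i',P_j)-\cost(P_i,P_j)\bigr)=\cost_{\PS'}(P_i')-\cost_{\PS}(P_i).
\]
The first equality holds because all pairs not involving $i$ are unchanged. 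For the second, I need $\cost_{\PS}(P_i)=\sum_{j\ne i}\cost(P_i,P_j)$. This follows from the definition: agent $i$ pays $w_e$ times the number of agents traversing $e$ in the opposite direction. Since $P_i$ is a simple path, $i$ never traverses an edge in both directions, so it never counts itself.

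Hence the game is an exact potential game, as in Rosenthal's argument. Every improving move strictly decreases $\Phi$, and $\Phi$ takes only nonnegative integer values because the weights are positive integers. So Nash dynamics terminate, and the terminal profile is a pure Nash equilibrium. In particular, a global minimizer of $\Phi$, which exists since each strategy set is finite and nonempty by feasibility, is an equilibrium. This also gives PoS $=1$, which I note in passing.

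For the step bound, each improving step decreases $\Phi$ by at least $1$, so the number of steps is at most the initial value of $\Phi$. I bound $\Phi$ using $x_{\overrightarrow{e}}+x_{\overleftarrow{e}}\le k$, since each agent uses each edge at most once. By AM--GM,
\[
x_{\overrightarrow{e}}x_{\overleftarrow{e}}\le k^2/4\le k^2 ,
\]
so $\Phi\le \sum_{e} w_e k^2/4\le w_{\max}k^2|E|$, which is the claimed bound.

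The only delicate point is the identity between an agent's individual cost and the sum of its pairwise crossing costs. That identity relies on paths being simple, so that no self-crossing is counted; once it holds, the rest is the standard potential argument.
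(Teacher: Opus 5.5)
Your proof is correct. It uses the same potential as the paper, the total cost $\cost(\PS)$, and the same integrality argument for the step count; what differs is how you verify the exact-potential identity.

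The paper fixes the deviation $P_i\to P_i'$ and partitions $E$ into five classes: edges used only by $P_i$, only by $P_i'$, by both in the same direction, by both in opposite directions, and by neither. It then computes, class by class, the change in $w_e x_{\overrightarrow{e}}x_{\overleftarrow{e}}$ and in agent $i$'s cost. The delicate term is $w_e(x_{\overrightarrow{e}}-x_{\overleftarrow{e}}-1)$ on edges that $P_i'$ traverses in reverse.

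You instead write both the total cost and the individual cost as sums of the symmetric pairwise terms $\cost(P_i,P_j)$. The identity then follows once you note that a simple path never counts itself. Your route has three advantages:
\begin{itemize}
\item it is shorter and avoids the five-way bookkeeping;
\item it makes clear why the game is an exact potential game: any game whose costs are sums of symmetric pairwise interactions has this property;
\item your AM--GM step gives the sharper bound $w_{\max}k^2|E|/4$.
\end{itemize}

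What the paper's edge-level accounting buys is an explicit formula for the marginal change in terms of edge loads. This includes the $-1$ correction for the deviator's own contribution on reversed edges. That is exactly the arc weighting $w_{uv}(x_{vu}-1)$ used later in Theorem~\ref{thm:Nash:verif}, where improving paths are found by a shortest-path computation.
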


\begin{proof}
    We show that $\cost(\PS)$ is a valid potential function. 
    Let $\PS$ be the set of $s_i$-$t_i$ paths $P_1, \ldots, P_k$.  
    For each edge $e = \{u,v\}\in E$, we define $\overrightarrow{e}=(u,v)$ and $\overleftarrow{e}=(v,u)$. Let
    $x_{\overrightarrow{e}}$ (resp., $x_{\overleftarrow{e}}$)  be the number of agents on $\overrightarrow{e}$ (resp., $\overleftarrow{e}$) in $\PS$. 
    Then the cost of $\PS$ is denoted by $\cost(\PS)=\sum_{e\in E} w_e x_{\overrightarrow{e}}x_{\overleftarrow{e}}$. 
    
    If $\PS$ is not a Nash equilibrium, there exists an agent $i$ that wants to change its path $P_i$ to a path $P'_i$ whose cost is less than $P_i$. Let $\PS'=\{P_1, \ldots, P'_i, \ldots, P_k\}$. Without loss of generality, we assume that whenever $P_i$ uses an edge ${u,v}\in E$, it does so in the direction $\overrightarrow{e}=(u,v)$. We categorize edges with respect to the difference between $P_i$ and $P'_i$:
    \begin{comment}
    \begin{itemize}
        \item 
        $E(P_i)\setminus E(P'_i)$ denotes the set of edges used by only $P_i$.  Without loss of generality, we suppose that $P_i$ uses ${\overrightarrow{e}}$.
        \item 
        $E(P'_i)\setminus E(P_i)$ denotes the set of edges used by only $P'_i$.  Without loss of generality, we suppose that $P'_i$ uses ${\overleftarrow{e}}$.
        \item $F_i$ denotes the set of edges used by both $P_i$ and $P'_i$ in the same direction. Without loss of generality, we suppose that both $P_i$ and $P'_i$ use ${\overrightarrow{e}}$.
        \item $R_i$ denotes the set of edges used by both $P_i$ and $P'_i$ in the different directions. Without loss of generality, we suppose that $P_i$ uses ${\overrightarrow{e}}$ and $P'_i$ uses ${\overleftarrow{e}}$.
        \item $E\setminus (E(P_i)\cup E(P'_i))$ denotes the set of edges used by neither $P_i$ nor $P'_i$.
    \end{itemize}
\end{comment}
   \begin{itemize}
        \item $E_{\overrightarrow{i}}$ denotes the set of edges used by only $P_i$, i.e., $E(P_i)\setminus E(P'_i)$.  Without loss of generality, suppose $P_i$ uses ${\overrightarrow{e}}$.
        \item $E_{\overleftarrow{i}}$ denotes the set of edges used by only $P'_i$, i.e., $E(P'_i)\setminus E(P_i)$.  Also, suppose $P'_i$ uses ${\overleftarrow{e}}$.
        \item $F_i$ denotes the set of edges used by both $P_i$ and $P'_i$ in the same direction. Also, suppose both $P_i$ and $P'_i$ use ${\overrightarrow{e}}$.
        \item $R_i$ denotes the set of edges used by both $P_i$ and $P'_i$ in the different directions. Also, suppose $P_i$ uses ${\overrightarrow{e}}$ and $P'_i$ uses ${\overleftarrow{e}}$.
        \item $\hat{E}_i $ denotes the set of edges used by neither $P_i$ nor $P'_i$, i.e., $E\setminus (E(P_i)\cup E(P'_i))$.
    \end{itemize}
Then the total cost of $\PS$ can be represented as follows:
\begin{align*}
\cost(\PS)=
        \sum_{e\in E_{\overrightarrow{i}}\uplus E_{\overleftarrow{i}}
        \uplus F_i \uplus R_i \uplus \hat{E}_i
        } w_e x_{\overrightarrow{e}}x_{\overleftarrow{e}}.
\end{align*}
Thus, we have:
       \begin{align*}
    \cost(\PS') - \cost(\PS)  \begin{multlined}[t]
         \sum_{e\in E_{\overleftarrow{i}}} w_e x_{\overrightarrow{e}}
         -\sum_{e\in E_{\overrightarrow{i}}} w_e x_{\overleftarrow{e}} 
         + \sum_{e\in R_i} w_e (x_{\overrightarrow{e}}-x_{\overleftarrow{e}} - 1).
       \end{multlined}
\end{align*}

On the other hand, since
$$\cost_{\PS}(P_i)=  \sum_{e\in E_{\overrightarrow{i}}} w_e x_{\overleftarrow{e}} x_{\overrightarrow{e}}
 + \sum_{e\in F_i} w_e x_{\overleftarrow{e}}
+ \sum_{e\in R_i} w_e x_{\overleftarrow{e}},$$
we have:
\begin{align*}
    \cost_{\PS'}(P'_i) - \cost_{\PS}(P_i)
    &= \begin{multlined}[t]
         \sum_{e\in E_{\overleftarrow{i}}} w_e x_{\overrightarrow{e}} - \sum_{e\in E_{\overrightarrow{i}}} w_e x_{\overleftarrow{e}} 
         + \sum_{e\in R_i} w_e (x_{\overrightarrow{e}} - x_{\overleftarrow{e}} - 1)
       \end{multlined}\\
    &= \cost(\PS') - \cost(\PS).
\end{align*}

    
    By assumption that $\cost_{\PS'}(P'_i)  - \cost_{\PS}(P_i) < 0$, $\cost(\PS') - \cost(\PS)<0$ holds, implying that $\cost(\PS)$ is the potential function. Since the maximum cost of a strategy profile is at most $w_{\max}k^2|E|$, the Nash dynamics converge at most $w_{\max}k^2|E|$ steps.
\end{proof}

\subsection{Price of anarchy and price of stability}
\begin{theorem}\label{theorem:pos}
    The price of stability is 1.
\end{theorem}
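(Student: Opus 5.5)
The plan is to show that every optimal solution is itself a Nash equilibrium. Once that holds, the best equilibrium has the same cost as the optimum, so the price of stability is exactly $1$. The tool is the potential-function identity from the proof of Theorem~\ref{theorem:Nash}. That argument shows that if agent $i$ unilaterally switches from $P_i$ to $P'_i$, producing $\PS'$, then
\[
\cost_{\PS'}(P'_i) - \cost_{\PS}(P_i) = \cost(\PS') - \cost(\PS),
\]
so the total cost $\cost(\cdot)$ is an exact potential.

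The steps, in order, are as follows.

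First, since $\Gamma$ is feasible and each agent has only finitely many simple $s_i$-$t_i$ paths, the set of strategy profiles is finite and nonempty. Hence there is an optimal solution $\PS^*$ minimizing $\cost(\PS)$.

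Second, suppose for contradiction that $\PS^*$ is not a Nash equilibrium. Then some agent $i$ has a path $P'_i$ with $\cost_{\PS'}(P'_i) < \cost_{\PS^*}(P^*_i)$, where $\PS'$ is obtained from $\PS^*$ by replacing $P^*_i$ with $P'_i$.

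Third, by the identity above, $\cost(\PS') - \cost(\PS^*) < 0$. This contradicts the minimality of $\PS^*$, so $\PS^*$ is a Nash equilibrium.

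Fourth, conclude the ratio. The cheapest equilibrium costs at most $\cost(\PS^*)$ and, being a feasible profile, at least $\cost(\PS^*)$. So its cost equals the optimum and PoS $=1$.

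The only delicate point is the degenerate case $\cost(\PS^*)=0$, where the ratio is $0/0$. I would handle it by convention: the best equilibrium also has cost $0$, so the ratio is taken to be $1$, or equivalently the claim is read as saying that some equilibrium attains the optimum. I do not expect a real obstacle, since all the work is done by the exact-potential identity of Theorem~\ref{theorem:Nash}. I would only double-check that the identity applies to an arbitrary unilateral deviation, not just to improving ones; the derivation there is purely algebraic, so it does.
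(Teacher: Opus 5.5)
Your proof is correct and follows the paper's route: the paper's proof of Theorem~\ref{theorem:pos} just says that, by the exact-potential argument in the proof of Theorem~\ref{theorem:Nash}, a minimum-cost profile admits no improving deviation and is therefore a Nash equilibrium. Your extra details (an optimum exists because each agent has finitely many simple paths, and the $0/0$ convention when the optimum is $0$) are sound refinements of that same argument.
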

\begin{proof}\quad 
    By the proof of \Cref{theorem:Nash}, the minimum cost solution is a Nash equilibrium.
\end{proof}

\begin{theorem}\label{theorem:poa}
    The price of anarchy is unbounded even for monotone edge costs. 
\end{theorem}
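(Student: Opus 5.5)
The plan is to build one explicit family of instances. In each instance there is an optimal profile of cost $0$, and there is also a Nash equilibrium of strictly positive cost. Then the ratio $\cost(\text{worst NE})/\cost(\text{OPT})$ is infinite, or it can be made arbitrarily large if one insists on a positive optimum. The phrase ``monotone edge costs'' I read as saying that the crossing-cost structure $w_e x_{\overrightarrow{e}}x_{\overleftarrow{e}}$ is itself non-decreasing in the loads, so unit weights suffice. I will use an unweighted (or uniformly weighted) undirected graph, so the construction is as restrictive as possible.

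\textbf{Construction.} Take a cycle formed by two internally disjoint $u$-$v$ paths, an ``upper'' path $U$ and a ``lower'' path $L$, each of length $\ell\ge 1$. There are two agents: agent $1$ with terminals $(u,v)$ and agent $2$ with terminals $(v,u)$.
\begin{itemize}
\item In the optimal profile, agent 1 uses $U$ from $u$ to $v$ and agent 2 uses $L$ from $v$ to $u$. No edge is used in opposite directions, so the optimum is $0$.
\item In the bad profile, both agents use $U$, in opposite directions. Each agent then pays $\ell$, and the total cost is $\ell>0$.
\end{itemize}

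\textbf{The obstacle: making the bad profile stable.} I need to check that this bad profile is a Nash equilibrium, and that is the main obstacle. In the plain cycle it is not an equilibrium, since agent 1 could deviate to $L$ at cost $0$. So I will modify the instance to block that deviation. I add many extra agents whose paths are fixed, because their terminals are joined by a single edge, or whose only sensible route is a particular edge of $L$. These agents traverse an edge of $L$ in the direction $v\to u$. Suppose the edge has weight $1$ and carries $M>\ell$ such agents.
\begin{itemize}
\item If agent 1 deviates to $L$ (going $u\to v$), it pays at least $M>\ell$.
\item If agent 2 deviates to $L$ (going $v\to u$), it travels in the same direction as the blockers and would pay $0$. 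This breaks symmetry, so I will block $L$ in both directions by using two different edges of $L$: one edge carries $M$ blockers oriented $v\to u$, and another edge carries $M$ blockers oriented $u\to v$.
\item Then any deviation to $L$ costs at least $M>\ell$, so neither agent 1 nor agent 2 wants to leave $U$.
\item The blockers each have a unique path (a single edge between their terminals, and the graph has no other route of length one that avoids the cycle). More carefully, I must check that a blocker's alternative path around the rest of the cycle is not cheaper. I will verify this by noting that a detour would cross the $M$ opposite blockers on the other blocked edge, or cross agents 1 and 2 on $U$.
\end{itemize}

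\textbf{Recomputing the optimum.} The blockers contribute $M\cdot M$ crossings only if the two groups share an edge. Since they sit on different edges, they contribute $0$. The optimum remains $0$, with agents 1 and 2 on $U$ and $L$ respectively? No: this changes things, because agent 2 on $L$ now meets the opposite blockers. To fix this, I will route the optimum so that agents 1 and 2 both avoid crossing anyone, for example by giving $U$ two parallel routes $U_1, U_2$ as well. Then the optimum places agents 1 and 2 on $U_1$ and $U_2$ at cost $0$, while the equilibrium has both on $U_1$ at cost $\ell$. Each deviation from $U_1$ to $U_2$ must also be blocked in the same way, by placing opposite blocker bundles on $U_2$. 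But then the optimum cannot use $U_2$ either.

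This tension is the crux. The resolution I will try first is a coordination gadget: the optimal profile moves several agents simultaneously, and no single unilateral move helps. Concretely, I will take $U_1$ shared and $U_2$ containing one edge $f$ carrying one blocker $b$ in direction $v\to u$. In the optimum, $b$ reroutes itself onto another private route, and agent 1 takes $U_2$; in the equilibrium, $b$ stays on $f$, which makes $U_2$ costly for agent 1. Scaling with $\ell$ large and $b$'s alternative route costing slightly more than $0$ while its current cost is $0$ gives $\text{NE}/\text{OPT}\to\infty$, and the ratio is infinite if OPT is $0$.
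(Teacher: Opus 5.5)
There is a genuine gap. You correctly diagnose why the bare two-agent cycle fails, but the final ``coordination gadget'' is only sketched, and as described its bad profile is still not a Nash equilibrium.

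\textbf{Agent $2$ can deviate for free.} With agents $1$ and $2$ on $U_1$ and the single blocker $b$ on $f$ oriented $v\to u$, agent $2$ can switch to $U_2$. It then travels $v\to u$ in the same direction as $b$ and pays $0<\ell$. This is precisely the asymmetry you flagged one paragraph earlier and then dropped.

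\textbf{Agent $1$ is not held in place either.} Unless $f$ has weight at least $\ell$, one blocker makes $U_2$ cost only $1<\ell$ for large $\ell$. Moreover, if $b$'s ``private route'' between the endpoints of $f$ consists of undirected edges, agent $1$ can traverse it in reverse to bypass $f$ at cost $0$.

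\textbf{An unnecessary requirement.} Your condition that $b$'s alternative cost ``slightly more than $0$'' is not needed, and it would make the optimum positive. Since $b$ pays $0$ in the bad profile, it can never strictly improve, so only agents of positive cost need to be checked.

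Your blocker approach can be repaired. For example, in a mixed graph:
\begin{itemize}
\item let $b$'s alternative be a directed path $q\to z\to p$ around $f=\{p,q\}$;
\item put at least $\ell$ blockers on $f$;
\item put at least $\ell$ further blockers oriented $u\to v$ on another edge of $U_2$.
\end{itemize}
Then the bad profile is an equilibrium of cost $\ell$ while the optimum is $0$. None of this is carried out, however, so as written the statement is not proved.

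The idea you are missing is that no external blockers are needed: the agents can block one another symmetrically. The paper takes the triangle on $\{a,b,c\}$ with two agents at each vertex, one destined for each of the other two vertices. If every agent uses its direct edge, each crosses exactly one opposing agent (cost $1$). Its only alternative, the two-edge detour, meets one opposing agent on each of its two edges (cost $2$). So this is a Nash equilibrium of positive cost. Now route the three agents whose target lies against the cyclic order $a\to b\to c\to a$ the long way around. Every edge is then used in one direction only, giving cost $0$. Hence the ratio is unbounded with no weights, arcs, or auxiliary agents.
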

\begin{proof}
    Let $G=(\{a,b,c\}, \{\{a,b\},\{b,c\},\{c,a\}\}$ be a complete graph with three vertices.  There are two agents on each vertex and we denote them by $A_1$, $A_2$, $B_1$, $B_2$, $C_1$, $C_2$.
    The terminal vertices of $A_1$, $A_2$, $B_1$, $B_2$, $C_1$, $C_2$ are $b$, $c$, $c$, $a$, $a$, $b$, respectively.

    Let $\PS_1$ be a strategy profile as follows:
     $P_{A_1} = a-b$, $P_{A_2} = a-c$,   $P_{B_1} = b-c$, $P_{B_2} = b-a$, $P_{C_1} = c-a$, and $P_{C_2} = c-b$.
    Under $\PS_1$, the cost of each agent is 1. Moreover, it is easily seen that each agent cannot reduce its cost by changing the current path. Thus, $\PS_1$ is a Nash equilibrium, and the cost of $\PS_1$ is 6.

    Next, consider the following strategy profile $\PS_2$ as follows:
    $P_{A_1} = a-b$, $P_{A_2} = a-b-c$, $P_{B_1} = b-c$, $P_{B_2} = b-c-a$, $P_{C_1} = c-a$, $P_{C_2} = c-a-b$.
    Then the cost of  $\PS_2$ is 0. Therefore, the price of anarchy is unbounded.
\end{proof}

\subsection{Computing a Nash equilibrium}
\begin{theorem}\label{thm:Nash:verif}
    Given a strategy profile $\PS$, one can determine whether $\PS$ is a Nash equilibrium in $O(k(|E|\log w_{\max} + |V|\log |V|))$. Furthermore, if $\PS$ is not a Nash equilibrium, an improving path can be found in the same running time.
\end{theorem}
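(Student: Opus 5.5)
For each agent $i$ we compute its best response to the other paths $\PS_{-i}=\PS\setminus\{P_i\}$ and compare its cost with $\cost_{\PS}(P_i)$. The profile $\PS$ is a Nash equilibrium iff no agent can strictly improve. Whenever some agent can, the best response found is an improving path. So the whole task reduces to $k$ shortest-path computations in an auxiliary digraph.

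\textbf{Counts.} First we compute, for every edge $e=\{u,v\}\in E$, the numbers $x_{\overrightarrow{e}}$ and $x_{\overleftarrow{e}}$ induced by $\PS$. This takes time linear in $\sum_i |E(P_i)| \le k|E|$. For a fixed agent $i$ we need the counts $x^{-i}_{\overrightarrow{e}}, x^{-i}_{\overleftarrow{e}}$ with $P_i$ removed. These are obtained from the global counts by subtracting $1$ along the $O(|E|)$ edges of $P_i$, and the change is undone afterwards, so each agent costs $O(|E|)$ here.

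\textbf{Auxiliary digraph.} Next we build the digraph $D_i$ on $V$. Each arc $(u,v)\in A$ becomes an arc of weight $0$, since arcs carry no crossing cost. Each edge $e=\{u,v\}\in E$ becomes two arcs: $(u,v)$ with weight $w_e x^{-i}_{(v,u)}$ and $(v,u)$ with weight $w_e x^{-i}_{(u,v)}$.

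By the definition $\cost_{\PS}(P_i)=\sum_{e\in E(P_i)} w_e x_{\overleftarrow{e}}$, the cost to agent $i$ of any $s_i$-$t_i$ path $P'$, with the others fixed, equals the length of $P'$ in $D_i$. Note that $x_{\overleftarrow{e}}$ in $\PS'$ counts only other agents, because agent $i$ travels in the forward direction.

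All weights are nonnegative, so Dijkstra's algorithm with a Fibonacci heap gives a shortest $s_i$-$t_i$ walk. With nonnegative weights, a shortest walk can be taken to be a simple path: any cycle has nonnegative length and can be shortcut. Hence the shortest walk is a legitimate strategy, and its length is the best-response cost $c_i^*$.

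We then compare $c_i^*$ with $\cost_{\PS}(P_i)$, which is computed in $O(|E|)$ time. If $c_i^*<\cost_{\PS}(P_i)$, the shortest path is an improving path and we output it. If this fails for every $i$, then $\PS$ is a Nash equilibrium.

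\textbf{Running time.} This is the step needing care. Arc weights are products of $w_e\le w_{\max}$ and counts $\le k$, so arithmetic on them costs $O(\log w_{\max})$ bit operations per edge relaxation (the $\log k$ factor is absorbed in the word size). Dijkstra with Fibonacci heaps then runs in $O(|E|\log w_{\max} + |V|\log|V|)$ per agent, counting the arcs of $A$ as unit-cost relaxations within the stated bound, with additions as the bit-cost operations. Summing over the $k$ agents gives $O(k(|E|\log w_{\max}+|V|\log|V|))$.

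The main obstacle is the correctness claim that path cost in $D_i$ exactly equals agent $i$'s deviation cost. In particular, an edge traversed by $P'$ in the direction opposite to the old $P_i$ must be counted against the others' counts only. Using $x^{-i}$ rather than $x$ handles precisely the $R_i$ correction (the ``$-1$'' term) seen in the proof of Theorem~\ref{theorem:Nash}.
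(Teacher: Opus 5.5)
Your proposal is correct and follows essentially the same route as the paper: for each agent $i$, the paper also builds the digraph obtained by bidirecting every edge, computes a shortest $s_i$-$t_i$ path with Dijkstra, and compares its length with $\cost_{\PS}(P_i)$. Your weights $w_e x^{-i}$ coincide with the paper's case split ($w_{uv}(x_{vu}-1)$ if $P_i$ uses $(v,u)$, and $w_{uv}x_{vu}$ otherwise), and your running-time accounting is exactly as loose as the paper's: both ignore the bit-cost model and leave the $|A|$ weight-$0$ arcs out of the Dijkstra bound.
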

\begin{proof}
Let $G=(V,E,A)$ be an input graph. Given a strategy profile $\PS$, we verify for each agent $i$ whether it has an incentive to change the current path $P_i$. To do this, we define the directed graph $G'=(V, A')$ obtained from $G$ by replacing (undirected) edges in $E$ with bidirectional arcs. For each arc $(u,v)\in A'$, we set the weight $w'_{uv}$ as follows:
    \begin{align*}
    w'_{uv} = 
        \begin{cases}
            w_{uv} (x_{vu}-1)& \text{(if $vu\in P_i$ and $\{u,v\}\in E$)}\\
            w_{uv} x_{vu} & \text{(if $vu\notin P_i$ and $\{u,v\}\in E$)}\\
            0 & \text{(otherwise)}
        \end{cases}
    \end{align*}
    Let $P'_i$ be a shortest $s_i$-$t_i$ path in $G'$.
    We can see that the weight of the path $P'_i$ in $G'$ is equal to the cost of the path $P'_i$ for agent $i$ when it changes its strategy to $P'_i$. Therefore, if the weight of $P_i'$ in $G'$ is less than the cost $\cost_{\PS}(P_i)$ of the path $P_i$ under $\PS$, agent $i$ has an incentive to change, and $P'_i$ is an improving path for agent $i$. On the other hand, if there is no such path, any other $s_i$-$t_i$ path is at least as costly as $P_i$ when the paths of other agents are fixed, and hence agent $i$ has no incentive to change its path. 
    If any agent has no improving path, then $\mathcal{P}$ is a Nash equilibrium.
    Since the edge weights can be set in time $O(|E| \log w_{\max})$ and a shortest $s_i$-$t_i$ path can be computed in time $O(|E| + |V|\log |V|)$, one can verify whether $\PS$ is a Nash equilibrium in time $O(k(|E|\log w_{\max} + |V|\log |V|))$.
\end{proof}

From \Cref{theorem:Nash} and \Cref{thm:Nash:verif}, we obtain the following corollary.

\begin{corollary}
    One can find a Nash equilibrium in time \break $O(w_{\max}k^2|E| (|E|\log w_{\max} + |V|\log |V|))$.
\end{corollary}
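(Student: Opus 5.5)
The proof combines the convergence bound of \Cref{theorem:Nash} with the per-step verification and improvement procedure of \Cref{thm:Nash:verif}. We run best-response (Nash) dynamics from an arbitrary initial profile and bound the number of rounds by the range of the potential.

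\textbf{Initial profile.} Since $\Gamma$ is assumed feasible, for each $i\in[k]$ we compute an arbitrary $s_i$-$t_i$ path, for instance by BFS in the directed graph where each edge of $E$ is replaced by two opposite arcs. This gives a starting profile $\PS^{(0)}$ in $O(k(|V|+|E|))$ time, which is dominated by the later bound.

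\textbf{Iteration.} Given the current profile $\PS^{(t)}$, we apply \Cref{thm:Nash:verif}. In time $O(k(|E|\log w_{\max} + |V|\log |V|))$ it either certifies that $\PS^{(t)}$ is a Nash equilibrium, in which case we output it, or returns an agent $i$ together with an improving path $P'_i$. In the latter case we set $\PS^{(t+1)}$ to be $\PS^{(t)}$ with $P_i$ replaced by $P'_i$.

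\textbf{Bounding the number of iterations.} By the proof of \Cref{theorem:Nash}, $\cost_{\PS'}(P'_i)-\cost_{\PS}(P_i)=\cost(\PS')-\cost(\PS)$, so each improving move strictly decreases the potential $\cost(\PS)$. The potential is a nonnegative integer because all weights are positive integers, so each move decreases it by at least $1$. It is at most $\sum_{e} w_e x_{\overrightarrow{e}}x_{\overleftarrow{e}}\le w_{\max}k^2|E|$, which is also the bound stated in \Cref{theorem:Nash}. Hence there are at most $w_{\max}k^2|E|$ improving moves, and at most $w_{\max}k^2|E|+1$ calls to the procedure of \Cref{thm:Nash:verif}.

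\textbf{Total time and the remaining detail.} Multiplying the number of calls by the cost per call gives $O(w_{\max}k^2|E|\cdot k(|E|\log w_{\max}+|V|\log|V|))$. This carries an extra factor of $k$ compared with the claimed bound, and removing it is the only point that needs care. We handle it by charging the verification cost per agent check rather than per round. We scan agents cyclically, and each single-agent shortest-path computation costs $O(|E|\log w_{\max}+|V|\log|V|)$. After an agent's path is updated we continue the scan from the next agent, so a full pass of $k$ consecutive checks with no improvement certifies an equilibrium. The checks that find an improvement number at most $w_{\max}k^2|E|$. Each check that finds none can be charged to the next improvement, with at most $k-1$ such checks between consecutive improvements, plus one final pass of at most $k$ checks. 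Bounding the total number of single-agent checks this way yields the claimed running time $O(w_{\max}k^2|E|(|E|\log w_{\max} + |V|\log |V|))$. If this charging argument falls short of removing the factor $k$ exactly, the fallback is to note that the potential bound can be tightened. For that I would first try using $x_{\overrightarrow{e}}x_{\overleftarrow{e}}\le k^2/4$ together with the fact that the per-agent improvement is at least $1$.
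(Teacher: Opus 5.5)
Your overall route is the paper's. Its entire justification is that the corollary follows from \Cref{theorem:Nash} and \Cref{thm:Nash:verif}: run improvement dynamics from a feasible profile, bound the number of improving moves by the potential bound $w_{\max}k^2|E|$, and find each move with the shortest-path procedure. Your steps up to the last paragraph are sound: the feasible start, the exact potential, integrality, and a decrease of at least $1$ per move. You also correctly notice that the honest product is $O(w_{\max}k^2|E|\cdot k(|E|\log w_{\max}+|V|\log|V|))$.

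The gap is the claimed removal of this extra factor $k$. In your cyclic scan you charge at most $k-1$ unsuccessful single-agent checks to each improving move. With $M\le w_{\max}k^2|E|$ moves, the number of checks is therefore at most $M+(k-1)M+k=kM+k$. That is $O(w_{\max}k^3|E|)$ shortest-path computations, exactly the naive bound you wanted to beat. Charging would only help if the number of failed checks per move were $O(1)$ amortized, and nothing guarantees that. A move of agent $i$ changes the counts on $E(P_i)\cup E(P'_i)$, so it can create or destroy improving moves for any other agent.

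The fallback does not help either. The bound $x_{\overrightarrow{e}}x_{\overleftarrow{e}}\le k^2/4$ saves only a constant factor. The potential can genuinely reach $w_{\max}k^2|E|/4$: take a path of edges of weight $w_{\max}$ traversed end to end by $k/2$ agents in each direction. So bounding the number of moves by the range of the potential cannot give anything asymptotically smaller than $w_{\max}k^2|E|$.

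Consequently your argument proves $O(w_{\max}k^3|E|(|E|\log w_{\max}+|V|\log|V|))$, not the stated bound. The paper's one-line derivation has the same discrepancy. \Cref{theorem:Nash} gives up to $w_{\max}k^2|E|$ moves, and \Cref{thm:Nash:verif} costs $O(k(|E|\log w_{\max}+|V|\log|V|))$ per move. So the stated $k^2$ does not follow from these two results as they stand.

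You spotted a real slip, but the proposal then asserts the claimed bound without an argument that supports it. Obtaining $k^2$ would need a genuinely new ingredient. One option is an $O(w_{\max}k|E|)$ bound on the length of a suitably chosen improvement sequence. Another is a way to locate an improving agent with amortized $O(1)$ single-agent shortest-path computations per move. Without such an ingredient, the correct conclusion to state is the $k^3$ bound.
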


At the end of this section, we prove that finding a Nash equilibrium is PLS-complete by showing a PLS-reduction from the Quadratic Threshold Game (QTG), which is PLS-complete~\cite{AckermannRV08}. 
\begin{theorem}\label{thm:Nash:PLS}
    Finding a Nash equilibrium in CC-MAR is PLS-complete.
\end{theorem}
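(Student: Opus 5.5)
We show membership in PLS and then PLS-hardness via a PLS-reduction from the Quadratic Threshold Game (QTG), which is PLS-complete~\cite{AckermannRV08}.

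\textbf{Membership.} Membership follows directly from the earlier results. A feasible solution is a strategy profile $\PS$, and an initial one can be computed in polynomial time by taking any $s_i$-$t_i$ path for each agent. By \Cref{theorem:Nash}, the objective $\cost(\PS)$ is an exact potential, so Nash equilibria are exactly the local optima of the neighborhood ``one agent switches its path''. By \Cref{thm:Nash:verif}, an improving neighbor can be found in polynomial time, or we can certify that none exists.

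\textbf{Recalling QTG.} In the QTG, each player $i$ chooses between two strategies, ``in'' or ``out''. Its cost is a threshold $T_i$ if it chooses ``out''. If it chooses ``in'', its cost is $\sum_{j\ \mathrm{in}} a_{ij}$, where the weights $a_{ij}=a_{ji}\ge 0$ are symmetric.

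\textbf{Building the CC-MAR instance.} We build one agent per player, with a gadget giving that agent exactly two meaningful routes. The \emph{out-route} uses a private edge of weight $T_i$ that is traversed in the opposite direction by a dummy agent. The \emph{in-route} passes, for every other player $j$, an edge $e_{ij}$ of weight $a_{ij}$ whose in-route of $j$ traverses it in the opposite direction. Then two ``in'' agents pay exactly $a_{ij}$ to each other, and nobody else pays on $e_{ij}$.

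Concretely, for each pair $\{i,j\}$ the edge $e_{ij}=\{u_{ij},v_{ij}\}$ is reached by agent $i$ via arcs entering $u_{ij}$ and leaving $v_{ij}$, and by agent $j$ via arcs entering $v_{ij}$ and leaving $u_{ij}$. All connections are arcs, so only the $e_{ij}$ edges and the threshold edges are undirected. The in-route of $i$ threads through the edges $e_{ij}$ for $j\neq i$ in a fixed order, say increasing $j$, connected by private arcs. This keeps the routes simple paths.

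Each dummy agent has a unique route and sits on its threshold edge. The dummy's cost does not matter, since it has no alternative and so can never deviate.

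\textbf{Main obstacle: ruling out hybrid paths.} The hard step is ensuring that agent $i$ has no other $s_i$-$t_i$ paths besides these two, in particular no paths that use another agent's arcs or edges. I would address this by using fresh vertices for each agent's arcs. The only shared elements are then the $e_{ij}$ edges. Traversing $e_{ij}$ from $u_{ij}$ to $v_{ij}$ leads only into $i$'s private arcs, and entering it from the other side requires $j$'s arcs, which are not reachable from $s_i$.

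I must also check that an agent cannot pass $e_{ij}$ in the ``wrong'' direction in a way that creates a shortcut. Since all surrounding connections are arcs, the entry and exit directions are forced, so no such shortcut exists.

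\textbf{Correctness of the reduction.} With the routes pinned down, agent $i$'s cost equals player $i$'s QTG cost in the corresponding profile, so the Nash equilibria coincide. The mapping of solutions back to QTG profiles is therefore trivial, and the reduction is a valid PLS-reduction.
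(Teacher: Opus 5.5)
Your membership argument and your choice of source problem (QTG with $c_{ij}(0)=c_{ij}(1)=0$ and $c_{ij}(2)=a_{ij}$) agree with the paper. The hardness gadget, however, has a genuine gap exactly at the step you flagged. The claim that, with fresh private vertices, ``the only shared elements are the $e_{ij}$ edges'' is false, because the \emph{endpoints} $u_{ij},v_{ij}$ are shared vertices. At $u_{ij}$, agent $i$'s incoming arc meets agent $j$'s outgoing arc. An agent standing at $u_{ij}$ can therefore leave along $j$'s arcs without traversing $e_{ij}$ at all, so $j$'s arcs \emph{are} reachable from $s_i$. With your increasing-order threading (taking $i<j$ in your convention) this already breaks for $n=3$. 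Contracting private arc paths, the arcs are $s_1\to u_{12}$, $v_{12}\to u_{13}$, $v_{13}\to t_1$, $s_2\to v_{12}$, $u_{12}\to u_{23}$, $v_{23}\to t_2$, $s_3\to v_{13}$, $u_{13}\to v_{23}$, $u_{23}\to t_3$. Agent $2$ then has the all-arc path $s_2\rightsquigarrow v_{12}\rightsquigarrow u_{13}\rightsquigarrow v_{23}\rightsquigarrow t_2$ of cost $0$. Take $T_1=T_3=2$, $T_2=10$, $a_{12}=5$, $a_{13}=a_{23}=1$. The profile where agent $2$ takes this path and agents $1,3$ take their in-routes (paying $1$ each) is a Nash equilibrium of your instance. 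Yet no QTG equilibrium has players $1$ and $3$ both in: if player $2$ is in, player $1$ pays $6>T_1$; if player $2$ is out, it pays $10$ but could pay $6$. So the equilibria do not coincide, and the reduction fails. This is not an artifact of the ordering. Each endpoint of each $e_{ij}$ is an entry for one agent and an exit for the other, so switching between routes is unavoidable and has to be made unprofitable by some global mechanism.

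The paper's construction supplies that mechanism; it does not try to keep routes apart. It uses the Ackermann--R\"oglin--V\"ocking grid, a DAG (rows left to right, columns downward) in which every $s_i$-$t_i$ path is monotone and makes exactly $i-1$ horizontal moves, all in rows $r\ge i$. Each horizontal arc of row $r$ becomes an arc-edge-arc path whose edge has weight $D\cdot r$ and carries a reverse dummy. The threshold edge gets weight $D\cdot i(i-1)+T_i$. Switching is possible, but every path other than the row-column and threshold paths costs at least $D\cdot i(i-1)+D$. For $D>\max_i T_i$ such a path is never a best response, so at equilibrium only the two intended strategies occur. Your opposite-direction crossing edge is worth keeping. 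The two in-agents must traverse it in opposite directions, with no dummy on it, for a lone in-agent to pay nothing. (The paper's resource gadget as written sends both main agents the same way against the dummy $a_{ij}$, which charges each of them $c_{ij}$ regardless of the other.) Place your edge at the grid point where column $i$ meets row $j$; this makes the bypasses harmless. Turning right out of column $i$ is a dead end, and turning down out of row $j$ forces the remaining horizontal moves into rows below $j$, costing at least $D$ extra.
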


From \Cref{thm:Nash:verif}, finding a Nash equilibrium in weighted CC-MAR is in PLS.
To show the PLS-hardness, we introduce the Quadratic Threshold Game (QTG), which is a special case of congestion games.

Let $B=\{1,\ldots, n\}$ be the set of $n$ agents.
Let $R=R^{\mathrm{in}}\cup R^{\mathrm{out}}$ be a set, called a \emph{resource set} where $R^{\mathrm{in}}\cap R^{\mathrm{out}} = \emptyset$, and $R^{\mathrm{in}}=\{r_{ij} \mid i,j\in B\}$ and $R^{\mathrm{out}}=\{r_i \mid i\in B\}$. That is, resource $r_i$ is used by only agent $i$. If $i$ uses $r_i$, it takes cost $T_i$, called \emph{threshold}. 
An element $r_{ij} \in R^{\mathrm{in}}$ is a resource for $i,j\in B$. Thus, $r_{ij}$ is used by $i$ or $j$, or both. The cost of $r_{ij}$ is defined by a nondecreasing function $c_{ij}:\{0,1,2\} \to \mathbb{Z}_{\ge 0}$ depending on the number of agents $x_{r_{ij}}\in \{0,1,2\}$ using $r_{ij}$. Here, $\mathbb{Z}_{\ge 0}$ denotes the set of nonnegative integers. 

Each agent $i$ has two strategies $S_{i}^{\mathrm{out}} = \{r_{i}\}$ where $r_{i} \in R^{\mathrm{out}}$ and $S_{i}^{\mathrm{in}} = \{r_{ij}\mid r_{ij} \in R^{\mathrm{in}}, j\in B, i\neq j\}$.
By the above definition, when $i$ chooses strategy $S_{i}^{\mathrm{out}}$, then the cost of $i$ is $T_{i}$, and when $i$ chooses strategy $S_{i}^{\mathrm{in}}$, then the cost of $i$ is $\sum_{j\in B, j \neq i} c_{ij}(x_{r_{ij}})$.
Therefore, a quadratic threshold game is denoted by $\Delta=(B,\{T_i\mid i\in B\},\{c_{ij}\mid i,j\in B, i\neq j\})$

Here, every agent can see other agents' strategies and prefers a less expensive strategy. For example, if the cost of choosing $S_{i}^{\mathrm{in}}$ is less than the current cost of choosing $S_{i}^{\mathrm{out}}$, then agent $i$ has an incentive to change its strategy. If any agent has no incentive to change, such a strategy profile is a Nash equilibrium.

\cite{AckermannRV08} shows that a Nash equilibrium on quadratic threshold games is PLS-complete.
\begin{theorem}[\cite{AckermannRV08}]
Finding a Nash equilibrium on quadratic threshold games with non-decreasing cost functions is PLS-complete under the following assumption.
\begin{itemize}
 \item for each pair $i$ and $j$, the cost function $c_{ij}$ satisfies $c_{ij}(0) = c_{ij}(1) = 0$ and $c_{ij}(2)\in \mathbb{Z}_{\ge 0}$, and
 \item for each $i$, the threshold $T_{i}$ is positive. 
\end{itemize}
\end{theorem}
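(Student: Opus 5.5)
This is the result of \cite{AckermannRV08}, but I would reprove it by a direct PLS-reduction from \textsc{Max-Cut} under the flip neighborhood. That problem is PLS-complete by Sch\"affer and Yannakakis, even when all edge weights are positive integers. Membership in PLS is routine: a QTG is a congestion game, so Rosenthal's potential gives a polynomially evaluable objective. Improving moves are checkable in polynomial time, since each agent has only two strategies. So the work is the hardness direction.

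\textbf{Construction.} Let $H=(B,F,w)$ be a \textsc{Max-Cut} instance with positive integer weights. Isolated vertices are irrelevant to flips, so I discard them and assume every vertex has positive weighted degree $W_i=\sum_{j}w_{ij}$. I make each vertex $i$ an agent and set
\[
c_{ij}(0)=c_{ij}(1)=0,\qquad c_{ij}(2)=2w_{ij},\qquad T_i=W_i>0,
\]
with $w_{ij}=0$ for non-edges. These functions are non-decreasing and satisfy both stated assumptions. A strategy profile is read as a cut: $I$ is the set of agents playing $S_i^{\mathrm{in}}$ and $O$ is its complement.

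\textbf{Key step: equilibria are exactly local optima.} Agent $i\in I$ pays $2w(i,I)$, where $w(i,I)$ is the weight from $i$ to the other members of $I$; this is its uncut weight to its own side. If $i$ played out it would pay $W_i$. So $i$ strictly wants to leave iff $2w(i,I)>W_i$, i.e.
\[
w(i,I)>w(i,O),
\]
which is exactly the condition that flipping $i$ strictly increases the cut. Symmetrically, agent $i\in O$ pays $W_i$ and would pay $2w(i,I)$ by moving in. It strictly wants to move iff $2w(i,I)<W_i$, i.e. $w(i,O)>w(i,I)$, which is again exactly an improving flip. Hence Nash equilibria of the QTG coincide with flip-local optima of \textsc{Max-Cut}, and the identity map on solutions is the PLS back-map.

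\textbf{Main obstacle.} The subtle point is that the \emph{out} strategy has a constant cost, while uncut edges inside $O$ also matter for \textsc{Max-Cut}. I resolve this by choosing $T_i$ as half the weighted degree (after doubling all weights for integrality), so the comparison against a constant threshold is equivalent to comparing $i$'s weight to both sides. I would double-check that ties are harmless: a non-strict flip is not improving in either model, so the correspondence holds. I would also check that the reduction is polynomial, since all numbers stay within a factor of two of the input weights.
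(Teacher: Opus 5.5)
Your reduction is correct and is essentially the argument of \cite{AckermannRV08}: a reduction from \textsc{Max-Cut} under the flip neighborhood, with $c_{ij}(2)$ proportional to $w_{ij}$ and $T_i$ equal to half the weighted degree, so that Nash equilibria coincide with flip-local optima. The paper itself does not reprove this statement and only cites it; your doubling of the weights ($c_{ij}(2)=2w_{ij}$, $T_i=W_i$) and your removal of isolated vertices are exactly what is needed to keep $c_{ij}(2)$ integral and $T_i$ positive, as the statement requires.
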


Thus, it is sufficient to show that any quadratic threshold game $\Delta=(B,\{T_i\mid i\in B\},\{c_{ij} \mid i,j\in B\})$ can be PLS-reduced to some instance of a CC-MAR $\Gamma = (G,\terminals)$. Our reduction presented here modifies a part of the reduction from QTG to a network congestion game in \cite{AckermannRV08}. 

\medskip

\noindent \textit{Proof of \Cref{thm:Nash:PLS}.}
Given a QTG $\Delta=(B,\{T_i\mid i\in B\},\{c_{ij} \mid i,j\in B\})$ with $n=|B|$ agents, we construct a grid-like graph as shown in \Cref{fig:grid}. This graph consists of a grid section with vertices arranged on the lower-left triangle of an $n \times n$ grid, including diagonal vertices where vertical arcs are oriented downwards and horizontal arcs are oriented from left to right. We denote by $v_{ij}$ the vertex placed in $i$th row and $j$th column.
Moreover, the graph has arcs from $v_{i1}$ to $v_{ni}$ for each $i\in [n]$. We call the arc $(v_{i1},v_{ni})$ the \emph{threshold arcs} of agent $i$. 

\begin{figure}[tb]
\centering
\includegraphics[keepaspectratio,width=0.4\linewidth]{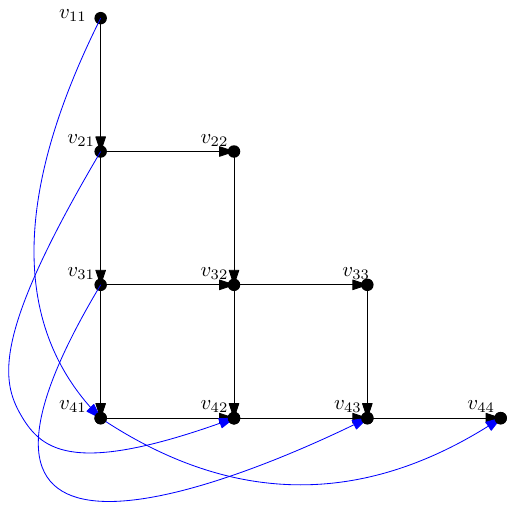}
\caption{A grid-like graph when $n=4$. The blue arcs are threshold arcs.} \label{fig:grid}
\end{figure}

We then define $n$ agents $p_1, \ldots, p_n$, called \emph{main agents}.  Their terminal pairs are given $(s_i, t_i) = (v_{i1}, v_{ni})$ for $i\in [n]$. Intuitively, each agent $i$ has two strategy: the threshold arc $(s_i, t_i)$ of $i$ and the \emph{row-column path} from $s_i$ to $t_i$, which first passes along horizontal arcs of row $i$ until column $i$ and then goes down to $t_i$ along vertical arcs in column $i$. The selection of these two paths corresponds to the selection of $S_{i}^{\mathrm{out}}$ and $S_{i}^{\mathrm{in}}$ in the QTC.

To enforce each agent to choose either its threshold arc or its row-column path, we additionally modify the instance.
First, we replace each horizontal arc and threshold arc $(u,v)$ by a mixed $u$-$v$ path consisting of $(u,u'), \{u',v'\},(v',v)$ as shown in \Cref{fig:replace:arc}.
Let $D$ be a large integer.
If an arc $(u,v)$ is a horizontal arc in $i$th row, we define the weight of edge $\{u',v'\}$ as  $w_{u'v'} = D\cdot i$. If an arc $(u,v)$ is the threshold arc,  we define the weight of edge $\{u',v'\}$ as  $w_{u'v'} = D\cdot i \cdot (i-1) + T_i$.
Moreover, we introduce an agent $a_{v'u'}$ with its terminal pair $(v',u')$ for each edge $\{u',v'\}$. Note that by the construction, the agent $a_{v'u'}$ always chooses the path $\langle v',u'\rangle$ as its strategy.

\begin{figure}[tb]
\centering
\includegraphics[keepaspectratio,width=0.35\linewidth]{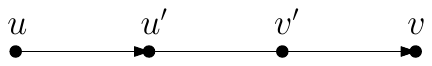}
\caption{The replacement of arc $(u,v)$ by a mixed $u$-$v$ path consisting of $(u,u'), \{u',v'\},(v',v)$.} \label{fig:replace:arc}
\end{figure}

\begin{figure}[t]
\centering
\includegraphics[width=0.35\linewidth]{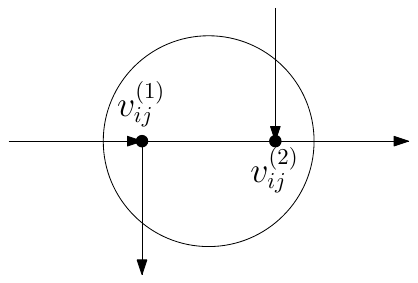}
\caption{The resource gadget of $r_{ij}$ obtained by replacing vertex $v_{ij}$.} \label{fig:replace:vertex}
\end{figure}

Finally, to simulate the cost of the resource $r_{ij}$, for $1\le i<j\le n$, we split vertex $v_{ij}$ into two vertices  $v^{(1)}_{ij}$ and $v^{(2)}_{ij}$ and add edge $\{v^{(1)}_{ij},v^{(2)}_{ij}\}$ with weight $c_{ij}$. Incoming arcs to $v_{ij}$ are incident to $v^{(1)}_{ij}$ and outgoing arcs from $v_{ij}$ are incident to $v^{(2)}_{ij}$.
We then introduce agent $a_{ij}$ whose terminal pair is $(v^{(2)}_{ij}, v^{(1)}_{ij})$. Similarly to $a_{v'u'}$, the strategy of $a_{ij}$ is only the path $\langle v^{(2)}_{ij}, v^{(1)}_{ij}\rangle$. We call this replacement part the \emph{resource gadget} of $r_{ij}$ (see \Cref{fig:replace:vertex}).
As for the terminal pair of agent $p_i$ for $2\le i\le n$, we set $(s_i,t_i) = (v^{(1)}_{i1}, v^{(1)}_{ni})$. The terminal pair of agent $p_1$ is set as $(s_1,t_1) = (v_{11}, v^{(1)}_{n1})$.
Since $D$ is a sufficiently large integer, each agent $p_i$ has only two strategies: the threshold arc $(s_i, t_i)$ of $i$ and the row-column path from $s_i$ to $t_i$, which first passes along horizontal arcs of row $i$ until column $i$ and then goes down to $t_i$ along vertical arcs in column $i$.

 The cost of the threshold path for $i$ is clearly $D \cdot i \cdot (i-1) +T_{i}$. The cost of the row-column path for $i$ is the sum of costs from the $i$ horizontal \emph{undirected} edges and the resource gadgets it traverses, which equals $D\cdot i \cdot (i-1) + \sum_{j \in B, j \neq i}c_{ij}(x_{r_{ij}})$ where $x_{r_{ij}}\in \{0,1,2\}$ represents the number of uses for the resource gadget of $r_{ij}$. 
 Indeed, if agent $i$ chooses the row-column path, it crosses agents in horizontal undirected edges and several main agents who also choose their row-column paths in the resource gadgets.
 Note that the cost function $c_{ij}$ satisfies that $c_{ij}(0) = c_{ij}(1) = 0$ and $c_{ij}(2)\in \mathbb{Z}_{\ge 0}$.
 Consequently, for each agent $i$, its strategy $S_i^{\text{out}}$ in the QTG $\Delta$ can be mapped one-to-one with the threshold arc of $i$ in the weighted CC-MAR $\Gamma$, and its strategy $S_i^{\text{in}}$ corresponds to the row-column path of $i$. Furthermore, we can easily see that a strategy profile in the QTG $\Delta$ is a Nash equilibrium if and only if a strategy profile $\mathcal{P}$ in the weighted CC-MAR $\Gamma$ is a Nash equilibrium. 
 Therefore, our construction is a PLS-reduction from QTG to weighted CC-MAR.
\qed

\section{Minimum-Cost Solution via Parameterization}

Hereafter, we consider the problem of finding a minimum-cost strategy profile (i.e., an optimal solution). 
Before presenting our main results, we establish key observations used throughout this and the next section.


\paragraph*{Cycles}
We can always assume that $G$ has no cycles. 
If this is not the case, we repeatedly contract a mixed cycle $C$ to a single vertex, while such a cycle exists. Finding such cycles can be done in polynomial time using the classic DFS algorithm~\cite{cormen01introduction}. 
Eventually, we obtain the resulting graph $G'=(V',E',A')$ having no mixed cycles. 
This procedure is safe since any agent moving from $u\in C$ to $v\in C$ could move within $C$ without cost. 
Hereafter, we assume that $G$ is acyclic, and therefore, $G[A]$ is a DAG and $G[E]$ is a forest.

\paragraph*{Pendant vertices}

We can always assume that the given graph does not have any pendant vertices i.e., vertices of degree~$1$.
In particular, we show how to deal with pendant vertices, before we delete them. 

Note that any pendant vertex without a terminal is redundant, as it is never used in any path (its usage would result in a walk instead). Therefore, it can be removed.

Now, consider a pendant vertex $v$ that appears as a terminal. Let $u$ be the unique neighbor of $v$ in $G$. If $\{u,v\} \notin E$, then $v$ is incident to an arc. 
Since arcs do not affect the cost of the solution, and a feasible solution exists (by assumption), we can replace all appearances of $v$ in $\terminals$ with $u$ and remove $v$.

Next, we consider the case where $\{u,v\} \in E$. If $v$ appears only as a source or only as a sink terminal, then the edge $\{u,v\}$ is used either only from $v$ to $u$, or only from $u$ to $v$. Therefore, it does not affect the cost, and we can safely replace all appearances of $v$ in $\terminals$ with $u$ and remove $v$.

Now, assume that $v$ appears in $\terminals$ both as a source and as a sink terminal. Let $\ell_s$ and $\ell_t$ be the number of times $v$ appears as a source and as a sink terminal, respectively. Since we assume that no pair $(v,v)\in \terminals$, any feasible solution includes $\ell_s$ paths that use $\{u,v\}$ to go from $v$ to $u$, and $\ell_t$ paths that use $\{u,v\}$ to go from $u$ to $v$. Also, no other path uses the edge $\{u,v\}$. This gives a cost of $\ell_s \cdot \ell_t$. Therefore, we can replace all appearances of $v$ in $\terminals$ with $u$, remove $v$, and adjust the cost appropriately (i.e., either set the cost as $\ell - \ell_s \cdot \ell_t$ in the decision version, or increase the cost by $\ell_s \cdot \ell_t$ after computing a solution of the new instance).

From the above, we establish a key observation.

\begin{observation}[Input Simplification] \label{obs:input:simpl}
    Without loss of generality, the input graph can be assumed to be acyclic and free of degree-$1$ vertices.
\end{observation}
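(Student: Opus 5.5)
The plan is to turn the two reductions described just before the statement into exhaustive rules. Each rule preserves the optimal cost up to an additive constant that we record, and each rule strictly decreases $|V|$. Applying them in any order until neither applies then gives an equivalent instance with no mixed cycle and no pendant vertex. Since every rule removes at least one vertex, the process stops after at most $|V|$ rounds. Each round is polynomial: a mixed cycle can be found by DFS, and a pendant vertex by a degree scan.

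\textbf{Cycle contraction.} Let $C$ be a mixed cycle, that is, a closed walk using arcs forward and edges in either direction; an undirected edge itself counts as a length-two mixed cycle. Contract $C$ to a single vertex $c$, relabel terminals in $C$ as $c$, and drop any pair $(c,c)$. Such a pair is served at zero cost inside $C$ by going around the cycle. We then check both directions.
\begin{itemize}
\item From $G$ to $G/C$: projecting an $s_i$-$t_i$ path of $G$ and shortcutting repeated visits of $c$ gives a path in $G/C$. This never increases any $x_{\overrightarrow e}x_{\overleftarrow e}$ on the surviving edges.
\item From $G/C$ to $G$: a path through $c$ can be lifted by routing around $C$ in its cyclic direction. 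The cost cannot increase relative to the contracted profile, because traversing $C$ in its cyclic orientation means no two agents use a cycle edge in opposite directions.
\end{itemize}

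The main obstacle is the lifting step. First, a lifted route may revisit a vertex, producing a walk rather than a path. We fix this by shortcutting the walk to a subpath, which never adds opposite traversals. Second, the cycle's cyclic orientation must be consistent for all agents, which requires choosing a single orientation of $C$ for every agent. Once contraction is exhausted, $G[E]$ contains no cycle and is therefore a forest. Likewise $G[A]$ contains no directed cycle and is therefore a DAG.

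\textbf{Pendant vertices.} The rule for a pendant vertex $v$ with unique neighbour $u$ follows the case analysis in the text.
\begin{itemize}
\item If $v$ is not a terminal, delete it. No path can enter and leave $v$.
\item If $v$ is attached by an arc, or is only a source or only a sink, redirect its terminals to $u$. The edge $uv$ is then used in one direction only and contributes $0$.
\item Otherwise $v$ is both a source and a sink. Add the forced constant $w_{uv}\ell_s\ell_t$ and redirect its terminals to $u$, removing any resulting pairs $(u,u)$.
\end{itemize}
Each redirected instance has the same optimum minus the constant, because paths correspond bijectively by appending or removing the edge $uv$. Deleting $v$ may create new pendant vertices or leave a cycle-free graph, so we simply iterate both rules. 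Since the total number of vertices strictly decreases in every round, the procedure terminates and establishes the observation.
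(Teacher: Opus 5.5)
Your plan follows the paper: contract mixed cycles, justified by letting every agent run through $C$ in one cyclic direction, and then apply the same pendant-vertex case analysis. The pendant-vertex part is fine. Recording the forced constant as $w_{uv}\ell_s\ell_t$ is slightly more careful than the paper's $\ell_s\ell_t$.

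The genuine problem is your definition of a mixed cycle. You allow any closed walk that uses edges in either direction, and you explicitly count a single undirected edge as a length-two mixed cycle. For such a $C$ the contraction rule does not preserve cost. A closed walk that traverses an undirected edge both ways has no cyclic orientation using each of its edges in one direction. So your key lifting claim, that no two agents use an edge of $C$ in opposite directions, fails exactly there.

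Concretely, take $V=\{u,v\}$, $E=\{\{u,v\}\}$ and pairs $(u,v),(v,u)$. The optimum is $w_{uv}$, since the two agents must use the edge in opposite directions. Contracting the ``cycle'' $u\to v\to u$ and dropping the resulting pairs $(c,c)$ gives optimum $0$. Under your definition every edge is such a cycle. So exhausting the rule would delete all of $E$ and collapse every instance to cost $0$.

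The repair is to contract only closed walks in which every undirected edge is traversed in a single direction. Without loss of generality these are:
\begin{itemize}
\item simple cycles of length at least $3$;
\item $2$-cycles formed by two \emph{distinct} connections between $u$ and $v$, such as an edge plus an arc, two antiparallel arcs, or parallel edges produced by an earlier contraction.
\end{itemize}
Keep such parallel edges as a multigraph rather than merging them, since merging can create crossings that did not exist in $G$. Orienting each edge of such a $C$ along the cycle gives a directed cycle, so every lifted route uses $C$ consistently. Each lifted route is then automatically a path, so your shortcutting step becomes unnecessary. With this correction your projection and lifting argument goes through, and the rest of your proof coincides with the paper's.
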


\paragraph*{Identical terminal pairs}
Recall that in the definition, we have allowed the same pair to appear multiple times in $\terminals$. This is important, as having identical terminal pairs multiple times affects the cost and may lead to completely different optimal solutions depending on the number of times each pair appears.

In this part, we prove that regardless of how many times a pair appears, there exists an optimal solution in which all identical pairs use exactly the same path to connect their terminals. To prove this, consider an instance $(G,\terminals)$ and an optimal solution $\mathcal{P}$ such that: 
\begin{itemize}
    \item There exist two terminal pairs $(s_i,t_i) = (s_j,t_j)$, where $0 < i < j \le k$, and 
    \item the $s_i$-$t_i$ and $s_j$-$t_j$ paths in the optimal solution are $P_i, P_j \in \mathcal{P}$ with $P_i \neq P_j$.
\end{itemize}
We will show that we can select one of the paths $P_i$ or $P_j$ and replace both with the selected one.

Let $P = P_i$. Note that $\cost(\mathcal{P} \setminus P_i) = \cost(\mathcal{P} \setminus P_j)$. Indeed, if $\cost(\mathcal{P} \setminus P_i) < \cost(\mathcal{P} \setminus P_j)$ or $\cost(\mathcal{P} \setminus P_i) > \cost(\mathcal{P} \setminus P_j)$, then the feasible solution $(\mathcal{P} \setminus \{P_j\}) \cup \{P_i\}$ or $(\mathcal{P} \setminus \{P_i\}) \cup \{P_j\}$ would have smaller cost than $\mathcal{P}$, contradicting optimality.

Finally, note that $(\mathcal{P} \setminus \{P_j\}) \cup \{P_i\}$ must have the same cost as $\mathcal{P}$, and therefore, there exists a solution in which $(s_i,t_i)$ and $(s_j,t_j)$ are connected by identical paths.

We can easily generalize this to any number of identical pairs by repeating the same process several times.


\begin{observation}[Solution Structure]\label{obs:solution:stucture}
    For any \ccMAR{} instance $(G,\terminals)$, there exists a minimum-cost solution where identical terminal pairs are connected by identical paths.
\end{observation}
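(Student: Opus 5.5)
The plan is to start from an arbitrary optimal solution and merge identical pairs one at a time, never increasing the cost. Group the terminal pairs into classes of identical pairs. Since the instance is feasible and has finitely many strategy profiles, an optimal solution $\mathcal{P}$ exists. Among all optimal solutions, I would pick one that minimizes the number of pairs $(i,j)$ with $(s_i,t_i)=(s_j,t_j)$ but $P_i\neq P_j$. If this number is zero we are done. Otherwise, take such $i,j$ and argue as in the discussion preceding the statement.

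The key step is to compare the two single-swap alternatives $\mathcal{P}_i := (\mathcal{P}\setminus\{P_j\})\cup\{P_i\}$ and $\mathcal{P}_j := (\mathcal{P}\setminus\{P_i\})\cup\{P_j\}$. I would write
\[
\cost(\mathcal{P}) = \cost(\mathcal{Q}) + c_i + c_j + \cost(P_i,P_j),
\]
where $\mathcal{Q}=\mathcal{P}\setminus\{P_i,P_j\}$ and $c_\ell=\sum_{Q\in\mathcal{Q}}\cost(P_\ell,Q)$. This uses the pairwise decomposition $\cost(\mathcal{P})=\sum_{a<b}\cost(P_a,P_b)$ from the preliminaries. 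Two copies of the same path never traverse an edge in opposite directions, so $\cost(P_i,P_i)=0$. Hence
\[
\cost(\mathcal{P}_i)=\cost(\mathcal{Q})+2c_i \quad\text{and}\quad \cost(\mathcal{P}_j)=\cost(\mathcal{Q})+2c_j .
\]
Averaging gives $\tfrac12(\cost(\mathcal{P}_i)+\cost(\mathcal{P}_j)) = \cost(\mathcal{Q})+c_i+c_j \le \cost(\mathcal{P})$, using $\cost(P_i,P_j)\ge 0$. So one of $\mathcal{P}_i,\mathcal{P}_j$ has cost at most $\cost(\mathcal{P})$ and is therefore optimal. This is a cleaner version of the paper's remark that $\cost(\mathcal{P}\setminus P_i)=\cost(\mathcal{P}\setminus P_j)$; it also shows that in fact $\cost(P_i,P_j)=0$ and $c_i=c_j$.

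The main obstacle is ensuring that the repeated replacement terminates with all identical pairs sharing a path, since a swap inside one class could in principle break agreements elsewhere. To avoid this, I would replace the pair-counting potential by processing one class at a time. Fix a class $I$ and a representative path $P^\ast$ used by some member of $I$ in the current optimal solution. For each other member $j\in I$ with $P_j\neq P^\ast$, apply the averaging argument to $j$ and the representative. Either $P_j$ can be replaced by $P^\ast$, increasing the number of members using $P^\ast$, or $P^\ast$ is worse in the sense that $c_{P^\ast} > c_j$. In the latter case, switching the representative to $P_j$ is no better, since by averaging the cost is then strictly below optimal, which is a contradiction. So in fact $c$-values are equal and replacement by $P^\ast$ always keeps optimality.

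Each replacement only changes paths of agents in class $I$ and makes them equal to $P^\ast$, so classes already processed are untouched. After at most $k$ replacements, all classes are uniform, which gives the statement.
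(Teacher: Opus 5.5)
Your proof is correct and uses essentially the same exchange argument as the paper. The paper compares $\cost(\mathcal{P}\setminus P_i)$ with $\cost(\mathcal{P}\setminus P_j)$, which are your $c_j$ and $c_i$, and concludes from optimality that replacing $P_j$ by $P_i$ keeps the solution optimal. Your averaging identity packages this more cleanly and also yields $\cost(P_i,P_j)=0$. Your fixed-representative processing of each class supplies the termination argument that the paper only summarizes as ``repeating the same process several times.''
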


\paragraph*{Parameterization by the number of vertices}
We present a fixed-parameter tractable (FPT) algorithm parameterized by the number of vertices $|V|$. This algorithm serves as a subroutine for several methods introduced later in the paper.
\begin{proposition}\label{prop:FPT:by:graph:size}
    A minimum-cost solution for a \ccMAR{} instance $(G,\terminals)$ can be computed in time $|V|^{O(|V|^3)} k^{O(1)}$.
\end{proposition}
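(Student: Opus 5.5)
Observation~\ref{obs:solution:stucture} lets us restrict attention to optimal solutions in which identical terminal pairs use identical paths. Since every pair lies in $V\times V$, there are at most $|V|^2$ distinct terminal pairs. Group $\terminals$ into classes $(s,t)$ with multiplicities $m_{(s,t)}$; this takes $O(k)$ time, and $\sum m_{(s,t)}=k$. It then suffices to choose one $s$-$t$ path $P_{(s,t)}$ for each distinct pair and minimize
\[
\sum_{\{u,v\}\in E} w_{uv}\, x_{uv}x_{vu},
\]
where $x_{uv}$ is the sum of $m_{(s,t)}$ over all distinct pairs whose chosen path traverses $(u,v)$.

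The plan is brute-force enumeration. Every path in $G$ is a sequence of distinct vertices, so the number of simple paths is at most $\sum_{\ell\le |V|}|V|^\ell \le |V|^{|V|+1}=|V|^{O(|V|)}$. For each of the at most $|V|^2$ distinct pairs we guess one of these paths, which gives at most $\bigl(|V|^{O(|V|)}\bigr)^{|V|^2}=|V|^{O(|V|^3)}$ combinations. For each combination we do three things:
\begin{itemize}
\item check that each guessed sequence is a valid $s$-$t$ path, in $\mathrm{poly}(|V|)$ time;
\item compute every $x_{uv}$ by adding the multiplicities, in $\mathrm{poly}(|V|)\cdot O(\log k)$ arithmetic, or simply $k^{O(1)}$;
\item evaluate the cost, which needs only $O(|V|^2)$ multiplications of numbers of size $O(\log k + \log w_{\max})$.
\end{itemize}
We return the cheapest combination, expanded back to $k$ paths by copying $P_{(s,t)}$ for each of its $m_{(s,t)}$ occurrences. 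This output step costs $O(k|V|)$. The total running time is $|V|^{O(|V|^3)}k^{O(1)}$, with input reading contributing the polynomial factor.

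Correctness follows from Observation~\ref{obs:solution:stucture}. Some optimal solution assigns identical paths to identical pairs, so it appears among the enumerated combinations, and our minimum is at most its cost. Every enumerated combination is itself a feasible solution, so our minimum is also at least the optimum. If the weights are large, their bit-length only adds a polynomial factor, which can be absorbed into the bound; strictly speaking it gives a polynomial factor in the input size.

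The main point needing care is avoiding a dependence of the form $|V|^{O(|V|)\cdot k}$. Naively guessing one path per agent would give exactly that, which is why the reduction to distinct pairs via Observation~\ref{obs:solution:stucture} is the key step. Bounding the number of simple paths by $|V|^{|V|+1}$ is routine. We could optionally apply Observation~\ref{obs:input:simpl} first, but it is not needed here.
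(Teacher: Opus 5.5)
Your proposal is correct and matches the paper's proof: both use Observation~\ref{obs:solution:stucture} to collapse duplicate pairs to at most $|V|^2$ distinct ones, guess one of $|V|^{O(|V|)}$ paths for each, evaluate the cost in polynomial time, and return the cheapest combination. The differences are cosmetic. You bound the number of paths by $|V|^{|V|+1}$ rather than $2^{|V|}|V|!$, and you compute the cost from edge loads rather than as $\sum_{i<j} k_i k_j \cost(P_i,P_j)$.
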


\begin{proof}
   The set $\terminals$ is a multiset. Let $\terminals'$ be the set of terminal pairs obtained from $\terminals$ after removing duplicates. Let $(s'_i,t'_i)$ be an enumeration of the terminal pairs in $\terminals'$, and let $k_i$ be the number of times $(s'_i,t'_i)$ appears in $\terminals$. Note that $k_i \ge 1$ for all $i \in |\terminals'|$.

    Recall that \Cref{obs:solution:stucture} showed that there exists an optimal solution of $(G,\terminals)$ where, for all $i \in |\terminals'|$, all $k_i$ copies of $(s'_i,t'_i)$ use exactly the same path.

    Note that, if we are given a path $P_i$ for each $i \in |\terminals'|$ such that $P_i$ is an $s'_i$-$t'_i$ path in $G$, we can compute the cost of a feasible solution where only these paths are used to connect the terminal pairs, in polynomial time. Indeed, the cost of such a feasible solution is:
    \[
    \sum_{0<i<j\le |\terminals'|} k_i \cdot k_j \cdot \cost(P_i, P_j)
    \]
    where $\cost(P_i, P_j)$ can be computed in polynomial time, and $|\terminals'| \le |V(G)|^2$. Therefore, we just need to compute which paths are used by each terminal pair in $\terminals'$.

    Fix a pair $(s',t') \in \terminals'$. Note that there exist at most $2^{|V(G)|} \cdot |V(G)|!$ different paths between $s'$ and $t'$. Indeed, we can enumerate all of them by guessing which vertices belong to $P_i$ and the order in which they appear. This also gives us the upper bound on the number of paths.

    Therefore, in order to compute an optimal solution of $(G,\terminals)$, it suffices to:
    \begin{itemize}
        \item guess, for each pair $(s'_i,t'_i) \in \terminals'$, the path that its copies will use,
        \item compute the cost of the guessed paths, and, 
        \item return the guess of minimum cost.
    \end{itemize}
    Since we have at most $|V(G)|^2$ pairs in $\terminals'$, the number of guesses is bounded by $|V(G)|^{O(|V(G)|^3)}$. Since the rest of the computations can be done in polynomial time, the algorithm runs in time $|V(G)|^{O(|V(G)|^3)} (|V(G)| + k)^{O(1)}$.
\end{proof}

\subsection{Parameterized by the number of agents}

\begin{theorem}\label{thm:XP:k}
    An minimum-cost solution for a \ccMAR{} instance $(G,\terminals)$ can be computed in time $|V|^{O(k^2)}k^{O(1)}$.
\end{theorem}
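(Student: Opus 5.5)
The plan is to reduce to \Cref{prop:FPT:by:graph:size} by compressing the instance to a graph whose size depends only on $k$. First apply \Cref{obs:input:simpl}, so that $G$ is acyclic, $G[E]$ is a forest and $G[A]$ is a DAG. The key structural claim is that there is an optimal solution $\PS=\{P_1,\dots,P_k\}$ whose union $H=\bigcup_i P_i$ has few ``branching'' vertices. Call a vertex \emph{important} if it is a terminal, or if it is a vertex where some two paths $P_i,P_j$ meet or separate (an endpoint of a maximal common subpath of $P_i$ and $P_j$, taken with or against orientation).

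If every pair of paths shares only $O(k)$ maximal common segments, then there are $O(k^2)$ important vertices in total, and hence $O(k^2)$ pieces between them. To get there, I would uncross solutions. Suppose $P_i$ and $P_j$ both visit $x$ and later $y$ (in the same order) but use different $x$--$y$ subpaths. Then both can be rerouted through the cheaper of the two subpaths, where ``cheaper'' is measured with respect to the remaining paths, as in the exchange argument of \Cref{obs:solution:stucture}. The crossing cost between $P_i$ and $P_j$ on that piece is zero, because they now use it in the same direction, so the total cost does not increase. A potential such as $\sum_i |E(P_i)|+|A(P_i)|$ with lexicographic tie-breaking makes the rerouting terminate.

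The main obstacle is that the rerouted walk may stop being a simple path. It may also interact with opposite-direction overlaps, where $P_i$ goes $x\to y$ and $P_j$ goes $y\to x$. For the simplicity issue, I would rely on acyclicity: in a graph with no mixed cycles, every walk that revisits a vertex contains a mixed cycle. So acyclicity should rule this out after the cycle contraction. For opposite-direction overlaps, $G[E]$ being a forest means the undirected part of a common segment is unique, so I would only need to bound how many times two paths can switch between overlapping and not overlapping. After uncrossing this is $O(1)$ per pair, hence $O(k^2)$ important vertices overall.

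Given this structure, the algorithm is as follows.
\begin{enumerate}
\item Guess the set $I$ of important vertices, of size $O(k^2)$; there are $|V|^{O(k^2)}$ choices.
\item Guess, for each $P_i$, the order in which it visits $I$; this costs $(k^2)^{O(k^2)}$.
\item Realize each consecutive segment $u\to v$ by a path that internally contains no vertex of $I$. Between important vertices the set of paths traversing a segment is fixed, so its contribution is that multiplicity times the cost accrued against the opposite-direction multiplicity. This is a shortest-path computation with weights $w_e\cdot(\text{fixed multiplicities})$.
\end{enumerate}

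Alternatively, one can build an auxiliary complete graph on $I$ whose edges record these segment types and invoke \Cref{prop:FPT:by:graph:size} on $O(k^2)$ vertices, which gives only $k^{O(k^6)}$ time but is still within $|V|^{O(k^2)}k^{O(1)}$ only when $k$ is small. I would therefore prefer the direct guessing. Among all guesses, return the cheapest consistent solution. The total time is $|V|^{O(k^2)}k^{O(1)}$.
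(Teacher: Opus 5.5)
The genuine gap is your step~3. The objective is not separable over segments. Each segment route is only required to avoid $I$ internally, not to avoid the routes chosen for the other segments. So two independently computed routes, say of different agents, can traverse the same undirected edge in opposite directions, and your per-segment accounting never sees that crossing.

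The shortest-path computation is also vacuous exactly where it matters. Segments traversed in both directions are forced anyway, since they are unique tree paths in $G[E]$. Every other segment has opposite multiplicity $0$, so all its edge weights are $0$ and an arbitrary route comes back. Consider the guess $I=\{\text{terminals}\}$ in an instance whose optimum is $0$ and whose optimal paths never meet. Your step may return two routes using an edge in opposite directions while reporting cost $0$, and you give no argument that some other guess is realized correctly.

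What a fixed guess actually demands is to route all non-crossing segments \emph{simultaneously} so that no new opposite traversal appears. That is a \textsc{Steiner Orientation} instance with $O(k^2)$ demand pairs, which is NP-hard in general. The paper handles exactly this step by calling the $|V|^{O(|\mathcal{T}'|)}$ algorithm of Cygan, Kortsarz and Nutov. It does so after guessing only the $O(k^2)$ crossing tree paths and the order in which each agent uses them. This joint routing step is the missing ingredient in your proof.

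Your structural part, by contrast, is essentially sound and even sharper than the paper's claim that two agents cross on at most one undirected path. After uncrossing, paths visiting $x,y$ in the same order share the $x$--$y$ subpath. By acyclicity, paths visiting them in opposite orders both follow the unique $x$--$y$ path of the forest $G[E]$. Hence each pair overlaps in one segment and $|I|=O(k^2)$. Three repairs are needed:
\begin{itemize}
\item Acyclicity does not stop a rerouted walk from revisiting a vertex, since it can backtrack along an undirected tree path. Shortcutting such a walk only lowers the cost, so this is easily fixed.
\item The potential $\sum_i |E(P_i)|+|A(P_i)|$ need not decrease, because the cheaper subpath may be longer or the two options may tie. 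Use instead a tiny generic per-edge perturbation. Averaging the two reroutings then shows that one of them strictly improves the perturbed cost unless the subpaths coincide.
\item Guessing an order on up to $O(k^2)$ important vertices for each of the $k$ paths costs $k^{O(k^3)}$ as you count it, not $(k^2)^{O(k^2)}$. This needs a tighter bound to fit the claimed running time.
\end{itemize}
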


\begin{proof}
The strategy is as follows: (1) We prove a structural property that in an optimal solution, any pair of agents cross each other on at most one undirected path, (2) We guess these crossing interactions for all $O(k^2)$ pairs of agents, and (3) For each guess, we reduce the problem of finding the remaining non-crossing paths to \textsc{Steiner Orientation}.

    \begin{claim}\label{claim:onepath:cross}
    In a minimum-cost solution $\PS^*$, any pair of agents crosses on at most one undirected path within $G[E]$.
    \end{claim}

    \begin{proofclaim}
        Suppose for contradiction that agents $i$ and $j$ cross on at least two disjoint undirected paths, $Q_1$ and $Q_2$, in $G[E]$. Let $P_i$ and $P_j$ be their respective paths in $\PS^*$. Assume that along $P_i$, the crossing on $Q_1$ occurs before the crossing on $Q_2$.
        Let $(u_1,v_1)$ and $(u_2,v_2)$ be the endpoints of $Q_1$ and $Q_2$, respectively. Without loss of generality, we assume $P_i$ traverses $Q_1$ from $u_1$ to $v_1$ and $Q_2$ from $u_2$ to $v_2$. Thus, $P_i$ has the structure $s_i \rightsquigarrow u_1  \rightsquigarrow v_1 \rightsquigarrow u_2  \rightsquigarrow v_2 \rightsquigarrow t_i$. Correspondingly, $P_j$ must traverse $Q_1$ from $v_1$ to $u_1$ and $Q_2$ from $v_2$ to $u_2$.

        There are two fundamental ways for $P_j$ to connect these segments, as shown in \Cref{fig:crosspaths:param_k}.

        \begin{figure}
            \centering
            \includegraphics[width=0.5\linewidth]{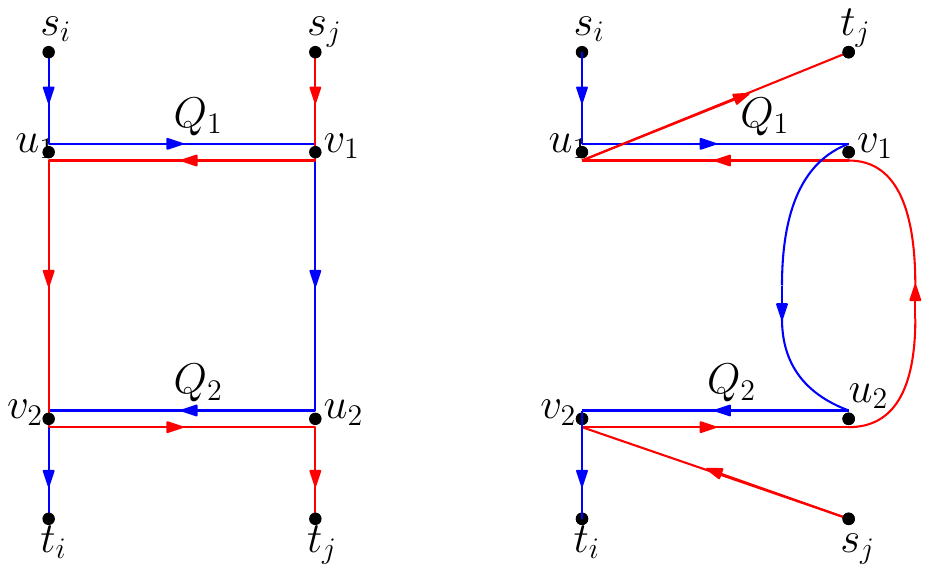}
            \caption{Case analysis in Claim~\ref{claim:onepath:cross}.}
            \label{fig:crosspaths:param_k}
        \end{figure}

        \textbf{Case 1:} $P_j$ is of the form $s_j \rightsquigarrow v_1  \rightsquigarrow u_1 \rightsquigarrow v_2  \rightsquigarrow u_2 \rightsquigarrow t_j$. Let $P'_i$ be the directed subpath of $P_i$ from $v_1$ to $u_2$, and $P'_j$ be the directed subpath of $P_j$ from $u_1$ to $v_2$. We can reroute the agents by swapping these directed subpaths, creating new paths $P_i^* = s_i\rightsquigarrow u_1 \overset{\text{$P'_j$}}{\rightsquigarrow} v_2 \rightsquigarrow t_i$ and $P_j^* = s_j \rightsquigarrow v_1  \overset{\text{$P'_i$}}{\rightsquigarrow} u_2 \rightsquigarrow t_j$. In this new solution, agents $i$ and $j$ no longer cross on $Q_1$ or $Q_2$. Since the rerouting occurs only on directed subpaths (where no costs are incurred), no new costs are introduced, but the two original crossing costs on $Q_1$ and $Q_2$ are eliminated. This yields a new solution with a strictly lower total cost, which contradicts the optimality of $\PS^*$.

        \textbf{Case 2:} $P_j$ is of the form $s_j \rightsquigarrow v_2 \rightsquigarrow u_2 \rightsquigarrow v_1 \rightsquigarrow u_1 \rightsquigarrow t_j$. In this case, the path segments form a mixed cycle $v_1 \rightsquigarrow u_2 \rightsquigarrow v_1$, which contradicts a standard assumption that the graph has no mixed cycles.
        
        Therefore, any pair of agents crosses on at most one undirected path in $G[E]$.
    \end{proofclaim}

    Based on \Cref{claim:onepath:cross}, our algorithm can guess all crossing interactions. For each of the $\binom{k}{2} = O(k^2)$ pairs of agents, we guess (a) whether they cross, and (b) if so, on which path in $G[E]$ they cross. Since $G[E]$ is a forest, a path is uniquely defined by its endpoints. The number of possible paths to guess for each agent pair is bounded by $O(|V|^2)$. Thus, the total number of guesses is $|V|^{O(k^2)}$.

    For a fixed guess of all crossing interactions, we must determine the full path for each agent. This involves not only the specified undirected crossing paths but also the order in which each agent traverses them. This ordering can also be guessed, contributing a factor of at most $(k-1)!$ for each agent, which is absorbed into the $|V|^{O(k^2)}$ term.
    
    With a fixed guess (including the ordering of crossings for each agent), the problem is reduced to finding a set of non-crossing directed paths. For each agent $i$, we need to connect their terminal $s_i$ to the entry of their first crossing path, connect the exit of one crossing path to the entry of the next, and connect the exit of the last crossing path to $t_i$. This is an instance of \textsc{Steiner Orientation}. The set of terminal pairs for this subproblem, $\mathcal{T}'$, has size $O(k^2)$.
    
    If the \textsc{Steiner Orientation} instance returns a solution, we construct the full path $P_i$ for each agent. 
    The \textsc{Steiner Orientation} subproblem can be solved in $|V|^{O(|\mathcal{T'}|)} = |V|^{O(k^2)}$ time~\cite{CyganKN13}. The total runtime is the number of guesses multiplied by the time for verification: $|V|^{O(k^2)} \cdot |V|^{O(k^2)} = |V|^{O(k^2)}$. 
    We finally return the valid guess of the minimum cost.
\end{proof}

\subsection{Parameterized by number of edges}

\begin{theorem}\label{thm:FPT:E}
    A minimum-cost solution for a \ccMAR{} instance $(G,\terminals)$ can be computed in time $2^{2^{O(|E|)}}(k|V|)^{O(1)}$.
\end{theorem}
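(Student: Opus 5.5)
The plan is to reduce the instance to one whose vertex count is bounded by a function of $|E|$, and then invoke \Cref{prop:FPT:by:graph:size}. For each terminal pair, what actually matters about a path is the sequence of oriented undirected edges it uses; the directed portions between them cost nothing. By \Cref{obs:input:simpl} we may assume $G$ is acyclic, so $G[E]$ is a forest. Let $U$ be the set of endpoints of edges in $E$, which has size at most $2|E|$. We will build a compressed graph $H$ on vertex set $U$ together with at most $2^{O(|E|)}$ further vertices, and make sure that connectivity through arcs is preserved.

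\textbf{Step 1 (arc closure among $U$).} For every ordered pair $(a,b)\in U\times U$, decide whether $G$ contains an $a$-$b$ path that uses only arcs. If so, add an arc $(a,b)$ to $H$. The reason this is safe is that a path in $G$ splits into maximal arc-only segments between consecutive undirected edges. We must also keep such segments vertex-disjoint within a single path. Acyclicity of the mixed graph helps here: any walk in $G$ that respects orientations cannot revisit a vertex, so every walk is automatically a path, and concatenating segments gives a valid path.

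\textbf{Step 2 (terminals).} Terminals need not lie in $U$, and $k$ is unbounded. For a terminal vertex $v$, define its type as the pair of sets $(\mathrm{Out}(v),\mathrm{In}(v))$, where $\mathrm{Out}(v)=\{a\in U: a \text{ reachable from } v \text{ via arcs}\}$ and $\mathrm{In}(v)$ is the analogous set of vertices of $U$ from which $v$ is arc-reachable. There are at most $2^{4|E|}$ types. A pair $(s,t)$ can only have its arc-only connection $s\to t$ and its routes through $U$ determined by the types of $s$ and $t$, together with the one extra bit saying whether $s$ reaches $t$ by arcs alone. If that bit is set, the pair costs $0$ and can be dropped.

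For the remaining pairs, we replace each terminal by a representative vertex of its type. That representative is joined to $H$ by arcs to every vertex in $\mathrm{Out}$ and from every vertex in $\mathrm{In}$. Terminals of the same type that were distinct in $G$ still behave identically with respect to the cost, since only undirected edges in $U$ contribute. The result is $H$ with $|V(H)|\le 2|E|+2\cdot 2^{4|E|}=2^{O(|E|)}$, a multiset of $k$ terminal pairs, and a cost-preserving correspondence of solutions. To justify this correspondence, we map each solution of $H$ back to $G$ by expanding arcs into arc-paths, where acyclicity guarantees the result is simple. Conversely, we map each solution of $G$ forward by shortcutting its arc segments.

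\textbf{Step 3.} We apply \Cref{prop:FPT:by:graph:size} to $H$. This gives running time $|V(H)|^{O(|V(H)|^3)}k^{O(1)}=2^{2^{O(|E|)}}k^{O(1)}$. Adding the polynomial preprocessing for reachability gives the claimed bound $2^{2^{O(|E|)}}(k|V|)^{O(1)}$.

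The main obstacle is Step 2: proving that merging terminals by type is cost-preserving. The simple-path requirement is the subtle point. In particular, a path through $H$ could in principle re-enter a representative vertex. Acyclicity of $G$ should exclude this, as long as we also check that $H$ stays free of mixed cycles, and in this step I expect to rely on the cycle contraction of \Cref{obs:input:simpl}.
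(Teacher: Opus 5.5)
Your proposal is correct and follows essentially the same route as the paper: keep $E$, replace arc-only connectivity among the endpoints $V_E$ of $E$ by closure arcs, represent each terminal by a vertex indexed by its arc-reachability set into (for sources) or from (for sinks) $V_E$ to get $2^{O(|E|)}$ vertices, and invoke \Cref{prop:FPT:by:graph:size}. The paper uses separate source-only vertices $s_U$ and sink-only vertices $t_U$, which avoids your worry about representatives appearing as intermediate vertices; your explicit dropping of pairs already connected by arcs alone handles a case the paper's construction glosses over (there, such a pair can become disconnected in $G'$).
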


\begin{proof}
    We reduce an instance $(G, \mathcal{T})$ to an equivalent instance $(G', \mathcal{T}')$ whose size is bounded by a function of $|E|$, and this reduction is performed in FPT time by $|E|$. 

    We construct the reduced graph $G'=(V', E', A')$. Let $V_E \subseteq V$ be the set of endpoints of the undirected edges in $E$ and let $r =|V_E|$ be the number of vertices incident to undirected edges. Note that $r = |V_E| \le 2|E|$. We then prepare two vertex sets $S' = \{s'_{U}\mid U\subseteq V_E\}$ and $T = \{t'_{U}\mid U\subseteq V_E\}$, which are associated with the subsets of $V_E$.The vertex set of $G'$ is defined as $V' = V_E\cup S'\cup T'$. 
    The edge set $E'$ of $G'$ is the same as $E$, i.e., $E'=E$.
    We also define the arc set $A'_E$ such that $(u,v)\in A'_E$ where $u,v\in V_E$ if and only if a directed path from $u$ to $v$ exists in the directed subgraph $G-E$ obtained by removing all edges from $G$.
    Finally, we connect $S'$ with $V_E$ and $T'$ with $V_E$, respectively. 
    For each vertex $s_U\in S'$, we add arcs to vertices in $U\subseteq V_E$. Similarly, we add arcs from vertices in $U\subseteq V_E$ to $t_U\in T'$. 
    We denote by $A'_{S'}$ and $A'_{T'}$ the set of arcs from $S'$ to $V_E$ and $V_E$ to $T'$, respectively, and define $A' = A'_E\cup A'_{S'}\cup A'_{T'}$.
    The constructed graph $G'$ has at most $r + 2\cdot 2^r$ vertices, at most $r^2$ edges, and at most $2^{r+1}r + r^2$ arcs.
    Since the reachability check can be computed in $O(|G|)$, the graph $G'$ can be constructed in time $2^{O(r)}|G|$.
    
    The new terminal pairs $\mathcal{T}'$ are defined on $G'$. For each original agent $i$, the new start terminal is $s_U\in S'$ where $U$ is the set of vertices in $V_E$ such that $s_i\in V$ can reach in $G-E$. Similarly, the new end terminal is $t_U\in T'$ where $U$ is the set of vertices in $V_E$ that can reach $t_i\in V$ in $G-E$.
    Since defining new terminal pairs is done in time $2^{O(r)}k|V|^{O(1)}$, the total running time for constructing the new instance is $2^{O(r)}k|V|^{O(1)}$.
    
    New instance $(G', \mathcal{T}')$ is equivalent to the original. A path in $G$ corresponds to a path in $G'$ by contracting its directed segments into single arcs, and vice versa. 
    Since crossing costs only occur on the undirected edges $E$, which are common to both graphs, the cost of any solution is preserved. 

    Since the number of vertices in $G'$ is at most $2^{O(r)}$, by Proposition~\ref{prop:FPT:by:graph:size}, 
    a minimum-cost solution can be found in time $2^{2^{O(r)}}k^{O(1)}$. 
    Therefore, the total running time is $2^{2^{O(|E|)}}(k|V|)^{O(1)}$.
    \end{proof}


\subsection{Parameterized by number of arcs plus $\boldsymbol{k}$}
\begin{theorem}
   A minimum-cost solution for a \ccMAR{} instance $(G,\terminals)$ can be computed in time $|A|^{O(k|A|)}|V|^{O(1)}$.
\end{theorem}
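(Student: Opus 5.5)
We plan to guess, for each agent, the sequence of arcs its path uses, and then solve the remaining problem, which lives in the purely undirected forest $G[E]$, in polynomial time. By Observation~\ref{obs:input:simpl} we may assume $G$ has no mixed cycles, so $G[E]$ is a forest and $G[A]$ is a DAG. A path $P_i$ therefore alternates between maximal undirected segments in $G[E]$ and single arcs of $A$. Since $P_i$ is a simple path, each arc is used at most once, so the arcs of $P_i$ form an ordered sequence $a^i_1,\dots,a^i_{\ell_i}$ of distinct arcs with $\ell_i\le |A|$.

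\textbf{Step 1 (guessing).} For every agent $i$ we guess this ordered sequence. There are at most $\sum_{\ell\le |A|}|A|^{\ell}=|A|^{O(|A|)}$ choices per agent, hence $|A|^{O(k|A|)}$ choices overall.

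\textbf{Step 2 (fixed guess).} Once the arcs are fixed, the undirected segments of $P_i$ are forced. Consecutive "ports" (namely $s_i$, the head of $a^i_1$, the tail of $a^i_2$, and so on up to $t_i$) must be joined inside $G[E]$. If the two ports lie in the same tree component, the forest admits exactly one path between them; if they lie in different components, the guess is invalid. The concatenation must also be a simple path, and this can be checked directly. So every guess determines at most one candidate profile, and we compute its cost $\sum_e w_e x_{\overrightarrow e}x_{\overleftarrow e}$ in polynomial time. We return the cheapest valid profile. Correctness holds because the optimal profile corresponds to some guess, and that guess reproduces it exactly.

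\textbf{Step 3 (running time).} The number of guesses is $|A|^{O(k|A|)}$, and each is processed in $|V|^{O(1)}$ time (tree paths via LCA or BFS, with $k\le$ poly absorbed or bounded). This gives $|A|^{O(k|A|)}|V|^{O(1)}$. We expect the main obstacle to be the dependence on $k$ in the polynomial factor, since per-guess work is $O(k|V|)$. To handle it, we would rewrite $k$ as a term bounded by $|A|^{O(k|A|)}$, which is trivially true, or else note that the stated bound tolerates it.

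A subtle point is that a segment between consecutive ports could be empty (two arcs sharing a vertex), and this is handled the same way. The key structural fact is the uniqueness of paths in the forest $G[E]$. This fact is what makes guessing only the arcs sufficient, and it is exactly where acyclicity from Observation~\ref{obs:input:simpl} is used.
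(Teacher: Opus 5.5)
Your proposal is correct and follows essentially the same route as the paper. The paper also guesses, for every agent, the ordered sequence of arcs on its path (its \emph{path form}, with at most $2^{|A|}|A|!$ choices per agent), and then fills each gap between consecutive ports with the unique path in the forest $G-A$, which is exactly what its polynomial-time Pre-Described \ccMAR{} subroutine does; your explicit simplicity check is harmless and in fact automatic once $G$ has no mixed cycles, and your handling of the factor in $k$ is no looser than the paper's own bound.
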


\begin{proof}
    Consider any path $P$ in $G$. We will say that $P$ has the \textit{path form} $a_1,\ldots, a_{m}$ if $\{a_i\mid i \in [m]\}$ is the set of arcs appearing in $P$ and, for any $0<i<j\le m$, $a_i$ appear before $a_j$, while moving through $P$.

    Note that, we have at most $2^{|A|}|A|!$ possible path forms. 
    Additionally, given a terminal pair $(s_i,t_i)$ and a path form $a_1,\ldots,a_m$, we can check in polynomial time if there exists an $s_i-t_i$ path that has the given form. Indeed, this can be done by verifying that the following three conditions hold:
    \begin{enumerate}
        \item Either $a_1=(s_i,v)$, for some $v\in V$ or $a_1=(u,v)$ and there exists an $s_i-u$ path $G-A$.
        \item Either $a_{m}= (u,t_i)$ for some $u\in V$ or $a_{m}=(u,v)$ and there exists an $v-t_i$ path $G-A$.
        \item For all $i \in [m-1]$, either $a_{i}= (u,v)$ and $a_{i+1}= (v,w)$ for some $u,w\in V$ or $a_{i}= (u,v)$, $a_{i+1}= (v',w)$ and there exists an $v-v'$ path $G-A$.        
    \end{enumerate}
    \noindent     
    
    We proceed by guessing the path form of each $s_i-t_i$ path in an optimal solution. This give us $2^{k|A|}(|A|!)^k$ possible guesses.
    Since it is easy to check the validity of our guesses, we can assume that we keep only the valid ones.

    We proceed by defining a sub-problem called Pre-Described \ccMAR. 
\begin{description}
    \item[Input:] An instance $(G,\terminals)$ of \ccMAR. Additionally, for each $(s_i,t_i)\in\terminals$ path form $a^i_1,\ldots,a^i_{m_i}$ and an integer $\ell$.
    \item[Question:] Are there $k$ paths $P_i$ such that, for all $i \in k$, $P_i$ is an $s_i$-$t_i$ path with path form $a^i_1,\ldots,a^i_{m_i}$ and $\cost(\{P_1,\ldots,P_k\})$ is at most $\ell$.
\end{description}

    \begin{claim}
        Pre-Described \ccMAR can be solved in polynomial time.
    \end{claim}

    \begin{proofclaim}
    We just need to find the minimum cost way to complete the path forms using only undirected edges. To do so we create the following instance:

    For each pair $(s_i,t_i)$ let $a^i_1,\ldots,a^i_{m_i}$ be the path form of the $s_i-t_i$ path we search.
    We create a set of terminal pairs that will replace $(s_i,t_i)$ as follows:
    \begin{itemize}
        \item if $a^i_1 = (u,v)$ and $u \neq s_i$ we create the pair $(s_i,u)$
        \item if $a^i_{m_i} = (u,v)$ and $v \neq t_i$ we create the pair $(v,t_i)$
        \item if $a^i_j=(u,v)$, $a^i_{j+1}=(v',w)$ and $v\neq v'$ we create the pair $(v,v')$.
    \end{itemize}

    We repeat this process for all pairs in $\terminals$. Let $\terminals'$ be the new set of terminals. Notice that, in order to create $s_i$-$t_i$ paths that respect the given path forms we need to connect all the terminal pairs in $\terminals'$ with undirected paths. 

    We can compute the minimum cost paths between all pairs in $\terminals'$ by solving the instance $(G-A,\terminals')$. This can be done in polynomial time as any pair of vertices has a unique path in $G-A$ since $G-A$ is a forest.
    
    Using the solution of $(G-A,\terminals')$ with the path forms we have guessed we have the solution that agrees with the path forms and has minimum possible crossing cost. 
    \end{proofclaim}

    Note that, each of our guesses defines an instance of Pre-Described \ccMAR\ problem.
    Since we can compute a solution of each such an instance polynomial time, we can compute an optimal solution for \ccMAR by solving all instances and keeping the solution of minimum cost. 
    
    The only part of our algorithm that is not polynomial is the guesses of the path forms. Therefore, the algorithm terminates in $2^{k|A|}(|A|!)^k |V|^{O(1)} = |A|^{O(k|A|)}|V|^{O(1)}$. 
\end{proof}
\subsection{Parameterized by number of arcs plus diameter}

\begin{theorem}
   A minimum-cost solution for a \ccMAR{} instance $(G,\terminals)$ can be computed in FPT time when parametrized by $|A|+diam(G)$, where $diam(G)$ is the diameter of $G$.
\end{theorem}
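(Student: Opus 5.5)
Under \Cref{obs:input:simpl}, $G$ is acyclic, so $G[E]$ is a forest and $G[A]$ is a DAG. My plan is to reduce to \Cref{thm:FPT:E}, i.e., to show that after further safe preprocessing $|E|$ is bounded by a function of $|A|+diam(G)$.

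The first step bounds the number of edge components. Let $V_A$ be the set of endpoints of arcs, so $|V_A|\le 2|A|$. Every tree component of $G[E]$ that contains no vertex of $V_A$ is a connected component of $G$ by itself. In such a tree every terminal pair has a unique path, so its contribution to the cost is forced and can be computed directly in polynomial time. We then delete these components. Afterwards every tree of $G[E]$ meets $V_A$, so there are at most $2|A|$ trees.

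The second step bounds the size of each tree. The remaining graph is connected; treat each component separately if it is not. Any two vertices of a tree $T$ are joined in $G$ by a path of length at most $diam(G)$. This does not by itself bound the diameter of $T$, because the short path may use arcs. I will therefore argue more directly: the vertices of $T$ split into parts according to how they are reached from $V_A\cap V(T)$. The leaves of $T$ have degree $1$ in $G$ unless they lie in $V_A$, and \Cref{obs:input:simpl} removed degree-$1$ vertices, so all leaves of $T$ lie in $V_A$. Hence $T$ has at most $2|A|$ leaves, and therefore at most $2|A|$ vertices of degree at least $3$. So $T$ consists of at most $4|A|$ maximal paths whose internal vertices have degree $2$ in $T$ and lie outside $V_A$; these vertices have degree exactly $2$ in $G$. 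Such a degree-$2$ path between branch or arc vertices $x,y$ is the only connection between its internal vertices and the rest of the graph. A shortest walk from a middle vertex to another tree must leave through $x$ or $y$, so the path has length at most $2\,diam(G)+1$. Consequently $|E(T)|=O(|A|\cdot diam(G))$ and $|E|=O(|A|^2 diam(G))$.

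The final step applies \Cref{thm:FPT:E}, which gives running time $2^{2^{O(|A|^2 diam(G))}}(k|V|)^{O(1)}$. I need to check that the preprocessing of \Cref{obs:input:simpl} does not increase the diameter. Contracting cycles and deleting pendant vertices only shortens distances, so if I use the diameter of the simplified graph the parameter does not grow. The main obstacle is the length bound on degree-$2$ paths. It relies on distances in the undirected sense, or in whichever reachability sense the paper uses for $diam(G)$ in a mixed graph. If diameter is meant directionally, I would instead argue via degenerate cases: with directed distances some vertex pairs may be unreachable, so the diameter is infinite unless the graph is strongly connected. In that case the cycle contraction collapses $G$ to a single vertex and the instance is trivial.
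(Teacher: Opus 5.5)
Your argument is correct, but it takes a different route from the paper. The paper starts from the same observation you use: after Observation~\ref{obs:input:simpl}, every leaf of the forest $G-A$ is an arc endpoint. It uses this only to bound degrees. A tree has maximum degree at most its number of leaves, so $\Delta(G)\le 3|A|$, and the Moore-type bound $|V|\le 1+diam(G)\,(3|A|)^{diam(G)}$ then lets it apply Proposition~\ref{prop:FPT:by:graph:size} directly.

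You push the tree structure further. Few leaves give few branch vertices, and a chain of degree-$2$ vertices can only be long if its middle vertex is far from everything outside the chain. This gives $|E|=O(|A|^2\,diam(G))$, and in fact $O(|A|\cdot diam(G))$, since the special vertices over all trees number at most $4|A|$. That bound is polynomial in the parameters, whereas the paper's vertex bound is exponential in the diameter. Your length bound needs a vertex outside the chain; make explicit that one exists because, after cycle contraction, no arc can have both ends in the same tree of $G[E]$.

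The price of your route is that Theorem~\ref{thm:FPT:E} costs a double exponential, $2^{2^{O(|A|^2 diam(G))}}$. This is generally worse than the roughly $2^{(3|A|)^{O(diam(G))}}$ that the paper's argument yields. You can get the best of both. Finite diameter forces $G$ to be connected, so every vertex is an arc endpoint or incident to an edge, and $|V|\le 2|A|+2|E|$ is polynomial in $|A|+diam(G)$. Feeding this into Proposition~\ref{prop:FPT:by:graph:size} gives running time $2^{\mathrm{poly}(|A|+diam(G))}$.

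Two smaller remarks. The same connectivity fact makes your first step, removing arc-free tree components, vacuous. Your reading of $diam(G)$ as distance in the underlying undirected graph is also the one the paper's degree-based argument implicitly uses.
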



\begin{proof}
    We first prove that a mixed graph bounded $A+diam(G)$ has a bounded number of vertices.
    Recall that by Observation~\ref{obs:input:simpl} we can assume that the given graph is acyclic and has no vertex of degree $1$.

    Note that $G-A$ is a forest with at most $2|A|$ leaves. Also, $\Delta(G-A)\le 2|A|$ as any tree has maximum degree at most equal to the number of leaves. Therefore, the maximum degree in $G$ is at most $3|A|$. Finally, any graph of diameter $diam(G)$ and maximum degree $\Delta$ has $1+diam(G)\Delta^{diam(G)}$ vertices~\cite{DiestelGT0030488}.

    Finally, since the number of vertices in $G$ are bounded by a function of $|A|+diam(G)$, using Proposition~\ref{prop:FPT:by:graph:size} results to an algorithm that runs in FPT time, parameterized by $|A|+diam(G)$. This completes the proof.
\end{proof}

\section{Minimum-Cost Solution on Restricted Settings}
\subsection{Parameterization by vertex cover number in unweighted graphs}
In this section, we consider unweighted graphs. We first show that the problem becomes tractable when parameterized by the vertex cover number plus $k$. Then, we show that the vertex cover number alone suffices if we additionally assume that terminal pairs do not overlap.

\begin{theorem}\label{thm:vc:plus:k}
    \ccMAR{} on unweighted graphs is FPT when parameterized by the vertex cover number $\vc$ plus the number of agents $k$.
\end{theorem}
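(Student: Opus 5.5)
Our approach is a kernelization-style reduction: we bound the number of relevant vertices by a function of $\vc+k$ and then invoke Proposition~\ref{prop:FPT:by:graph:size}. Let $C$ be a vertex cover of size $\vc$; one can be computed in time $2^{\vc}|V|^{O(1)}$. Set $I=V\setminus C$, an independent set. Every vertex of $I$ has all its neighbours in $C$. We call a vertex of $I$ a \emph{terminal vertex} if it appears in some pair of $\terminals$. There are at most $2k$ such vertices, and we keep all of them. Each remaining vertex $v\in I$ can only be used by a path as an internal vertex $c-v-c'$ with $c,c'\in C$.

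Next we classify the non-terminal vertices of $I$ by their \emph{type}. The type of $v$ records, for each $c\in C$, whether $v$ is joined to $c$ by an edge, by an arc into $v$, by an arc out of $v$, or not at all. This gives at most $4^{\vc}$ types.

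The key claim is that, within one type, only $O(k)$ vertices are needed. Each path visits a vertex of $I$ at most once and has at most $\vc$ internal vertices from $I$, because consecutive internal $I$-vertices are separated by vertices of $C$. So an optimal solution uses at most $k\vc$ vertices of $I$ as internal vertices. Two passages through vertices of the same type behave identically with respect to arcs. They create crossing cost only if they use the same edge $\{c,v\}$ in opposite directions. Now suppose a type has more than $k\vc$ vertices. Then in any solution some vertex $v'$ of that type is unused. We reroute one passage through $v'$. This is a simple exchange argument in the unweighted setting:
\begin{itemize}
\item rerouting can only remove crossings on the edges incident to the old vertex;
\item the new edges incident to $v'$ carry no other agent, so no crossing is added;
\item the path remains simple, since $v'$ was unused.
\end{itemize}
Repeating this, we obtain an optimal solution in which each $I$-vertex is used by at most one agent. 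Hence, for each type, keeping $k\vc+1$ arbitrary vertices and deleting the rest preserves the optimum value. A solution of the reduced instance is also a solution of the original instance, so the reduction is safe in both directions.

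After this reduction, $|V|\le \vc+2k+4^{\vc}(k\vc+1)$. Proposition~\ref{prop:FPT:by:graph:size} then yields a running time bounded by $f(\vc,k)\,k^{O(1)}$ plus polynomial preprocessing. The main obstacle is the exchange argument. We must argue carefully that rerouting a passage $c-v-c'$ to $c-v'-c'$ never increases cost: the new edges $\{c,v'\},\{v',c'\}$ are used only by this agent. This is exactly where unit weights and the absence of sharing matter. We also must not apply the cycle-contraction preprocessing before computing the cover, or else handle it consistently.
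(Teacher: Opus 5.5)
Your proposal is correct and follows essentially the same route as the paper: classify the non-cover vertices by their (edge, in-arc, out-arc) neighbourhood in the vertex cover, keep boundedly many representatives per type together with the cover and the terminals, and then apply Proposition~\ref{prop:FPT:by:graph:size} to the resulting instance, whose size depends only on $\vc+k$. Your per-type bound of $k\vc+1$ comes with an explicit exchange that reroutes one passage to an unused representative. This is more careful than the paper's bare claim that at most $k$ vertices of a type are ever used, which overlooks that a single path may pass through several vertices of one type. Also, because the rerouted passage creates zero crossing on the fresh edges, your argument does not actually need unit weights.
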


\begin{proof}
The main idea is to construct a small equivalent instance whose size is exponential in the parameter $\vc + k$. We begin by computing a vertex cover $S$ of size at most $2\vc$ (achievable in polynomial time~\cite{Bar-YehudaE81}). We then augment $S$ to include all terminal vertices, so its new size is bounded by $|S| \le 2\vc + 2k$.

By the definition of a vertex cover, $I = V\setminus S$ is an independent set. This means any path that enters a vertex in $I$ from a vertex in $S$ must immediately return to a vertex in $S$.

We can categorize vertices in $I$ by their "type," which is defined by their neighborhood in $S$. Specifically, two vertices $v_1, v_2 \in I$ are of the same type if they are connected to the exact same subset of vertices in $S$ via undirected edges, incoming arcs, and outgoing arcs. For any vertex $v \in I$, its neighborhood is defined by three subsets of $S$. Therefore, the total number of distinct types is at most $2^{|S|} \cdot 2^{|S|} \cdot 2^{|S|} = 8^{|S|}$.

Now, consider a set of vertices in $I$ that are all of the same type. Since there are only $k$ agents, at most $k$ of these functionally identical vertices can be used in any given solution. Any additional vertices of this type are redundant. Therefore, we can safely reduce the number of vertices of each type to just $k$.

This reduction rule creates a small instance with at most $(2\vc + 2k) + 8^{|S|} \cdot k$ vertices.
By Proposition~\ref{prop:FPT:by:graph:size}, the problem is FPT when parameterized by $\vc+k$.
\end{proof}

\begin{comment}
\begin{theorem}\label{thm:poly:kernel:vc:plus:k}
    \ccMAR{} admits a kernel of size $O(\vc^4+k)$. \tesshu{Is this unweighted case?}
\end{theorem}

\textit{Sketch Proof.} While the complete proof is rather technical, the kernelization algorithm's main idea is straightforward. We begin by computing a vertex cover $S$ of size at most $2\vc$. The algorithm relies on two crucial observations. First, any path connecting vertices $u,v \in S$ that avoids other vertex cover vertices must either move directly from $u$ to $v$ or pass through exactly one intermediate vertex $w$ from $V\setminus S$. Second, for any pair $(u,v) \in S\times S$, if there exists a path containing a subpath $u$-$w$-$v$ where $w \notin S$ is a non-terminal vertex, then $w$ can serve as a universal intermediate vertex for all $u$-$v$ paths. Since we do not know which vertices are used as intermediate vertices and there are $O(\vc^2)$ vertex pairs in $S\times S$, it suffices to keep $O(\vc^2)$ such intermediate vertices per pair. The selection of these vertices can be made almost arbitrarily. Combining these selected vertices with the terminal vertices yields our kernel.
\end{comment}


\begin{theorem}
    Let $(G,\terminals)$ be an instance of \ccMAR{} where $G$ is an unweighted graph and no vertex of $V(G)$ appears in more than one terminal pair. A minimum-cost solution of $(G,\terminals)$ can be computed in FPT time when parameterized by the vertex cover number $\vc$ of $G$. 
\end{theorem}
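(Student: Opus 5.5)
The plan is to handle the unbounded number of agents by showing that agents whose terminals lie in the independent set fall into boundedly many \emph{classes}. Each class can then be routed according to boundedly many \emph{patterns}, which turns the problem into an integer quadratic program whose number of variables depends on $\vc$ only.

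\textbf{Setup.} Compute a vertex cover $S$ with $|S|\le 2\vc$ and set $I=V\setminus S$. As in the proof of \Cref{thm:vc:plus:k}, give every $v\in I$ a \emph{type}: the triple of subsets of $S$ joined to $v$ by edges, by in-arcs and by out-arcs. There are at most $8^{|S|}$ types.

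\textbf{Bounded number of classes.} Since no vertex lies in two terminal pairs, at most $|S|$ pairs have a terminal in $S$. We put these pairs into $S$ (adding their $I$-endpoints), so $|S|$ grows by at most $2|S|$, and treat them individually. Every remaining pair $(s_i,t_i)$ has both endpoints in $I$. We assign it the class (type of $s_i$, type of $t_i$), giving at most $64^{|S|}$ classes, each with a multiplicity $n_c$ that may be unbounded.

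\textbf{Patterns.} A path is described by its \emph{pattern}. This is the sequence of $S$-vertices it visits, which has length at most $|S|$. It also records, for each consecutive pair $u,v$ of these, whether the step is a direct edge or arc or a detour $u$--$w$--$v$ through some $w\in I$, together with the type of $w$ and whether the two steps are edges or arcs. The first and last steps (from $s_i$ into $S$ and from $S$ to $t_i$) are recorded the same way. The number of patterns is bounded by $g(\vc)=|S|!\cdot 2^{O(|S|)}\cdot 8^{O(|S|^2)}$.

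\textbf{Structural lemma.} The key claim is an exchange argument in the spirit of \Cref{obs:solution:stucture}. There is an optimal solution in which the cost depends only on how many agents of each class use each pattern. Equivalently, detours through $I$ can be spread or merged so that every cost term becomes a bilinear form in these counts. The argument has three parts.
\begin{itemize}
\item Crossings on $S$--$S$ edges depend only on patterns.
\item For an edge $\{u,w\}$ with $w\in I$ used as a detour, the number of crossings is the product of the numbers of agents using $u$--$w$ and $w$--$u$. Among detour vertices of a single type, concentrating all opposite-direction users on one vertex is never better than spreading them out. We therefore argue that there is an optimal solution in which, for each type, opposite detours $u$--$w$--$v$ and $v$--$w'$--$u$ use distinct vertices whenever enough vertices of that type exist. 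Otherwise the number of such vertices is bounded by a function of $\vc$, and we keep them explicitly.
\item An agent's own terminal edge $s_i$--$x$ can be crossed only by detours passing through $s_i$. Detours through terminal vertices are again handled by redirecting them to a same-type vertex, or are counted explicitly.
\end{itemize}

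\textbf{Integer program.} With the lemma in hand, we introduce a variable $y_{c,p}\ge 0$ for each class $c$ and pattern $p$, subject to $\sum_p y_{c,p}=n_c$. We add $0/1$ choices for the at most $3|S|$ individual pairs, and guess which detour types are ``saturated''. The total cost becomes a quadratic polynomial in the $y$'s whose coefficients are at most $|S|^2$. We then solve it with the FPT algorithm for integer quadratic programming parameterized by the number of variables and the magnitude of the coefficients (Lokshtanov's result). The number of variables is $64^{|S|}g(\vc)$, so the total running time is FPT in $\vc$.

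\textbf{Main obstacle.} I expect the structural lemma to be the hardest step. In particular, the cost on edges incident to detour vertices depends on how agents are distributed among the actual vertices of a type, not merely on their counts, and terminal vertices in $I$ can also serve as detour vertices for other agents. My first attempt would be to prove that, within a type, an optimal solution assigns each ordered pair $(u,v)$ of $S$-endpoints to its own detour vertex when possible. This makes every detour crossing disappear except in a bounded number of forced situations, which can then be enumerated.
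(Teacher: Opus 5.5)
\textbf{There is a genuine gap: the structural lemma is unproved, and the sketch you give for it does not work.} Your scaffolding (boundedly many classes and patterns, then an integer quadratic program in the counts $y_{c,p}$) is fine as far as it goes. But the lemma you postpone is essentially the whole theorem. The difficulty is that the cost on edges incident to $I$ is not determined by the counts $y_{c,p}$. A detour through a terminal $s_j\in I$ crosses agent $j$ exactly when it traverses the first edge of $P_j$ in the opposite direction. So its price depends on \emph{which} agent of the class hosts it and on that agent's own pattern. Your patterns record only the type of the detour vertex, so they cannot see this.

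``Redirecting to a same-type vertex'' helps only when the type has unused non-terminal members. Suppose instead a type consists of sources of one class, each with an edge to $w$ and an arc to $y$, and no non-terminal of that type exists. Then the cheapest way to host $D$ detours $w\to x\to y$ costs $0$ if some agent of the class leaves its source by the arc, and $D$ otherwise. That is an indicator, not a bilinear form. Your ``spread opposite detours'' discussion also allows detours whose two steps are both undirected edges. And you never use the acyclicity of Observation~\ref{obs:input:simpl}, which is exactly what makes this part tractable.

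\textbf{The missing idea is acyclicity, and the paper's proof builds on it as follows.}
\begin{itemize}
\item If $x,x'\in I$ have the same type and $x$ had undirected edges to two vertices $w,w'\in S$, then $x$--$w$--$x'$--$w'$--$x$ would be a cycle in $G[E]$. So in every type with at least two members, each vertex has at most one undirected edge, all to a common neighbour $w$. A detour through such a vertex has at most one undirected step, and crossings there occur only on that edge.
\item Moving one visit from $x$ to $x'$ changes the cost by the difference of the opposite-direction loads on $\{x,w\}$ and $\{x',w\}$, so optimality forces these loads to be equal. This exchange yields an optimal solution with at most two intermediate vertices per type.
\item Swapping two same-type vertices of $I$, or two same-class pairs simultaneously, is an automorphism preserving $\terminals$. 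So you may fix these vertices in advance: at most two non-terminals per type and at most four pairs per class.
\item Take two pairs of one class whose terminals are not intermediate vertices. The no-overlap hypothesis makes the edges at those terminals private. Replace either path by the other with its endpoints swapped; by optimality neither replacement decreases the cost. Adding the two inequalities shows both replacements are optimal.
\item Hence all but at most four pairs of each class can use literally the same path up to endpoints. The paper then brute-forces boundedly many paths per class, with no integer programming.
\end{itemize}
If you keep your IQP route, you still need exactly these facts before the objective becomes a bounded-coefficient quadratic form. The few intermediate terminals and their agents must also be handled individually.
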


\begin{proof}
    Recall that we can assume that $G$ is acyclic due to Observation~\ref{obs:input:simpl}.
    Let $G$ be a graph with vertex cover number $\vc$.
    Let $S \subseteq V$ be a vertex cover of $G$ of size $\vc$. It can be computed in time $2^{\vc}n^{O(1)}$ \cite{0001N24}. Also, let $I = V \setminus S$ be the corresponding independent set.
    
    
    We now define types of terminal pairs. In particular:
    \begin{itemize}
    \item Any pair $(s,t) \in \terminals$ such that ${s,t} \cap S \neq \emptyset$ has a unique type.
    \item Two terminal pairs $(s,t), (s',t') \in \terminals$, where $s,t,s',t' \in I$, have the same type if:
    \begin{itemize}
    \item $N^+(s) = N^+(s')$, $N^-(s) = N^-(s')$, and $N(s) = N(s')$, and
    \item $N^+(t) = N^+(t')$, $N^-(t) = N^-(t')$, and $N(t) = N(t')$.
    \end{itemize}
    \end{itemize}
    Note that this gives us at most $\vc$ unique types, and at most $16^\vc$ additional types. Indeed, the additional types are at most $16^\vc$, as the number of different neighborhoods for vertices in $I$ is $4^\vc$.
    
    Using the types of pairs, we define types of vertices as follows:
    \begin{itemize}
    \item For each type of terminal pair, we define two vertex types: one for the source vertices of this type and one for the sink vertices. This gives us at most $16^{\vc+1} + 2\vc$ types.
    \item Two non-terminal vertices $u,v \in I$ have the same type if $N^+(u) = N^+(v)$, $N^-(u) = N^-(v)$, and $N(u) = N(v)$. This gives us at most $4^\vc$ additional types.
    \item Finally, we assign a unique type to each non-terminal vertex in $S$, giving at most $\vc$ additional types.
    \end{itemize}
    
    Note that each vertex type includes either non-terminal vertices with the same neighborhood or terminal vertices that are all either sources or sinks of a given terminal pair type. Also, each vertex in $S$ has its own unique type.
    
    Finally, if a vertex type contains more than one vertex, then all vertices of that type are incident to only one edge. Indeed, otherwise we could create a cycle, which contradicts the assumption that $G$ is acyclic.
    
    We proceed by defining types of paths in $G$ using the previously defined types of vertices. To do so, we first note that any path has length at most $2\vc + 1$, and no two consecutive vertices belong to $I$. This holds because the vertices in $I$ form an independent set.
    
    This gives us at most $(2^\vc \vc!)(16^{\vc+1} + 4^\vc + 1)^{\vc+1}$ types of paths, where the first factor accounts for the vertices from $S$ that appear in the path and their ordering, while the second factor accounts for whether a vertex from $I$ appears between vertices of $S$, and if so, what type it has.
    
    Having defined types of terminals, vertices, and paths, we are now ready to prove that there exists an optimal solution satisfying some very restricted properties.
    
    Before stating the claim, we define what we refer to as the \textit{intermediate vertices} of a path. Given a $u$–$v$ path $P$, we define the intermediate vertices of $P$ as the set $V(P) \setminus {u, v}$.
    \begin{claim}\label{claim:bounded:number:same:type:vertices:as:intermediate}
        There exist an optimal solution $\mathcal{P}$ of $(G,\terminals)$ where, from any given type of vertices in $I$ at most two vertices of this type appear as intermediate vertices in the paths in  $\mathcal{P}$.
    \end{claim}

    \begin{proofclaim}
        We start with an optimal solution $\mathcal{P}$. Let $\nu$ be a type of vertices in $I$ such that at least three vertices of type $\nu$ appear in $\mathcal{P}$ as intermediate vertices. We will show how to modify the paths so that only two vertices of type $\nu$ appear as intermediate vertices, without changing any of the vertices of any other type.
        
        Note that each vertex of type $\nu$ is incident to at most one edge. Indeed, otherwise we could create a cycle, which contradicts the assumption that $G$ is acyclic.
        
        We consider two cases: either each vertex of type $\nu$ has zero edges incident to it, or each has exactly one edge incident to it.
        
        \textbf{Case 1} (each vertex is incident to zero edges):
        Fix a vertex $u$ of the given type. Consider any appearance of a vertex of type $\nu$ as an intermediate vertex; let $v$ be such a vertex. It is safe to replace $v$ with $u$. Indeed, since $u$ and $v$ have the same type, they have the same neighborhood. Also, the cost does not change, since we only replace arcs in the paths. We can repeat this until only $v$ appears as an intermediate vertex of type $\nu$.
        
        \textbf{Case 2} (each vertex is incident to one edge):
        Let $w$ be the common neighbor of all vertices of type $\nu$, such that ${v, w} \in E$ for all vertices $v$ of type $\nu$.
        
        We define two sets of vertices, $V_\nu^+$ and $V_\nu^-$. We set $V_\nu^+$ to be the set of vertices of type $\nu$ such that:
        \begin{itemize}
        \item $v$ appears as an intermediate vertex in a path $P \in \mathcal{P}$, and
        \item $v$ appears exactly before $w$ in $P$.
        \end{itemize}
        We define $V_\nu^-$ to be the set of vertices of type $\nu$ such that:
        \begin{itemize}
        \item $v$ appears as an intermediate vertex in a path $P \in \mathcal{P}$, and
        \item $v$ appears exactly after $w$ in $P$.
        \end{itemize}
        
        We first modify the paths so that $V_\nu^+$ contains at most one vertex. If $|V_\nu^+| \leq 1$, we do not need to do anything. Therefore, assume that $|V_\nu^+| \geq 2$, and select arbitrarily a vertex $v \in V_\nu^+$. For each path in $\mathcal{P}$, we proceed with the following modification:
        \begin{itemize}
        \item For each $P \in \mathcal{P}$,
        \item Find any vertex of type $\nu$ that appears before $w$ in $P$,
        \item Replace this vertex with $v$.
        \end{itemize}
        
        Let $\mathcal{P}'$ be the set of paths after these modifications.
        Note that, since vertices of the same type have the same neighborhood, $\mathcal{P}'$ is a feasible solution for $(G, \terminals)$.
        We now show that it is also an optimal solution.
        
        We show that starting from an optimal solution, replacing just one vertex results in another optimal solution.
        Let $u$ be the vertex we replaced with $v$ in a path $P \in \mathcal{P}$. Also, let $P'$ be a path in which $v$ appears before $w$ in $\mathcal{P}$ (such a path exists since $v \in V_\nu^+$). The only change in the set of edges appearing in $\mathcal{P}$ is that we replaced ${u, w}$ with ${v, w}$.
        
        Let $x_1$ be the number of appearances of the edge ${w, u}$ (oriented as $(w, u)$) in the paths of $\mathcal{P}$, and let $x_2$ be the number of appearances of the edge ${w, v}$ (oriented as $(w, v)$). Note that $x_1 = x_2$. Indeed, otherwise, replacing $u$ with $v$ in $P$, or replacing $v$ with $u$ in $P'$, would result in a feasible solution with lower cost.
        
        Therefore, replacing $u$ with $v$ in $P$ results in a new optimal solution. This means that the previous process creates a sequence of optimal solutions until $V_\nu^+ = {v}$.
        
        Using symmetric arguments, we can reduce $V_\nu^-$ to a singleton.
        
        This shows that the only vertices of type $\nu$ appearing as intermediate vertices are those in the sets $V_\nu^+$ and $V_\nu^-$.
        
        Since the previous process affects only vertices of the same type, we can repeat it for all types that appear more than twice as intermediate vertices. Thus, the claim holds.
    \end{proofclaim}
    
    Hereafter, we assume that we have an optimal solution that satisfies the properties of Claim~\ref{claim:bounded:number:same:type:vertices:as:intermediate}.
    Since each terminal pair is related to exactly $2$ types of vertices, for each type of terminals, at most $4$ pairs of each type have their vertices as intermediate vertices in the considered optimal solution.
    Now, we will show that we can predetermine which terminal pairs are allowed to be used as intermediate vertices in paths.
    Before that, we will define the sets of vertices that we would like to allow as intermediate vertices.
    
    For each type of terminals $\tau$, let $\terminals_\tau \subseteq \terminals$ be the set of all terminal pairs of type $\tau$. We select (arbitrarily) $\min {4,|\terminals_\tau|}$ pairs of terminals of type $\tau$; let $\terminals^*_\tau$ be the set of these terminals.

    \begin{claim} \label{claim:select:terminals:as:intermediate}
        There exists an optimal solution $\mathcal{P}$ where any terminal vertex $v$ that appears as an intermediate vertex in a path in $\mathcal{P}$ belongs to a set $\terminals^*_{\tau}$.
    \end{claim}

    \begin{proofclaim}
        Note that we need to prove the claim for types $\tau$ such that $|\terminals_{\tau}| > 4$. Consider a type $\tau$ and
        a pair $(s,t)\in \terminals_\tau \setminus \terminals^*_\tau$ such that either $s$ or $t$ appear as intermediate vertices in a path in $\mathcal{P}$.
        We will show how to modify $\mathcal{P}$ in order to replace all appearances of $s$ and $t$ as intermediate vertices with appearances of $s'$ and $t'$ where $(s',t')\in \terminals^*_\tau$. Also, this process will not affect any other vertices.

        Note that, since in $\mathcal{P}$ at most $4$ pairs of type $\tau$ are used as intermediate vertices and $(s,t)\in \terminals_\tau \setminus \terminals^*_\tau$ there exists at least one pair $(s',t')\in \terminals^*_\tau$ where neither $s'$ nor $t'$ appear as intermediate vertices in any path in $\mathcal{P}$. 
        
        Since $(s,t)$ and $(s',t')$ are of the same type of terminals, by replacing all appearances of $s$ with $s'$ and $t$ with $t'$ we obtain a feasible solution of the same exact crossing cost. To see this, observe that any additional cost may appear because of edges incident to $s$, $t$, $s'$, and $t'$. 
        However since we have replace all appearances, then any cost related to some edge $\{v,w\}$, for $v\in \{s,t\}$ and a $w\in V$ was already appearing in $\mathcal{P}$ because of the edge $\{v',w\}$, for some $v'\in \{s',t'\}$. Similarly, any cost related to some edge $\{v',w\}$, for $v'\in \{s',t'\}$ and a $w\in V$ was already appearing in $\mathcal{P}$ because of the edge $\{v,w\}$, for some $v\in \{s,t\}$.

        Finally, we repeat this until the only intermediate vertices that belong in terminals of type $\tau$ are these in $\terminals^*_\tau$. Also, this does not affect any other terminal types. Therefore, we can repeat this for all terminal types.        
    \end{proofclaim}
    
    Now, we prove a similar claim about the appearances of vertices from $I$ that are not terminal vertices.
    Let, for each type $\nu$ of vertices that does not include a terminal vertex, $V_\nu$ be the set of all vertices of type $\nu$.
    For each such type, we select (arbitrarily) $\min {2,|V_\nu|}$ vertices of type $\nu$; let $V^*_\nu$ be the set of selected vertices.
        
    \begin{claim}\label{claim:select:non:terminals:as:intermediate}
        There exists an optimal solution $\mathcal{P}$ where any non-terminal vertex $v$ that appears in a path in $\mathcal{P}$ belongs in a set $V^*_{\nu}$.
    \end{claim}

    \begin{proofclaim}        
        The proof follows the same arguments as the proof of Claim~\ref{claim:select:terminals:as:intermediate}
    
        We need to prove the claim for types $\nu$ such that $|V_\nu| > 2$. Consider a type $\nu$ and
        a vertex $v \in V_\nu \setminus V^*_\nu$ that appears in a path in $\mathcal{P}$.
        Since we consider non-terminal vertices, we know that these vertices appear only as intermediate vertices in $\mathcal{P}$.
        We will show how to modify $\mathcal{P}$ in order to replace all appearances of $v$ with appearances of $v'$, where $v' \in V^*_\nu$. Also, this process will not affect any other vertices.
        
        By the claim~\ref{claim:bounded:number:same:type:vertices:as:intermediate}, since in $\mathcal{P}$ at most $2$ vertices of type $\nu$ are used as intermediate vertices and $v \in V_\nu \setminus V^*_\nu$, there exists at least one vertex $v' \in V^*_\nu$ that does not appear as an intermediate vertex in any path in $\mathcal{P}$.
        
        Since $v$ and $v'$ are of the same type of vertices, by replacing all appearances of $v$ with $v'$, we obtain a feasible solution of the exact same crossing cost. To see this, observe that any additional cost is due to edges incident to $v$ and $v'$.
        However, since we have replaced all appearances, any cost related to some edge ${v,w}$, for some $w \in V$, was already present in $\mathcal{P}$ due to the edge ${v',w}$. Similarly, any cost related to some edge ${v',w}$, for some $w \in V$, was already present in $\mathcal{P}$ because of the edge ${v,w}$. Therefore, the case is the same, and the new solution is also optimal.
        
        We repeat this until the only intermediate vertices of type $\nu$ are those in $V^*_\nu$. Also, this does not affect any other types. Therefore, we can repeat this for all types.
    \end{proofclaim}

    The Claims~\ref{claim:select:terminals:as:intermediate} and~\ref{claim:select:non:terminals:as:intermediate} allow us to avoid guessing which vertices of each type are used as intermediate vertices in an optimal solution, as we can select (arbitrarily) the vertices of each type that will be used as intermediate vertices.
    
    The final step before we proceed with the algorithm is to show that, for each type, the terminal pairs that do not appear as intermediate vertices can use the same type of paths.

    \begin{claim}\label{claim:identical:paths:at:most:5:per:type}
        There exists an optimal solution $\mathcal{P}$ such that, for each type of terminals $\tau$ and pair $(s,t),(s',t')\in \terminals_\tau\setminus\terminals_\tau^*$ the $s$-$t$ path and $s'$-$t'$ path in $\mathcal{P}$ have the same path type.
    \end{claim}

    \begin{proofclaim}
        Consider an optimal solution that satisfies the claims~\ref{claim:select:non:terminals:as:intermediate} and~\ref{claim:select:terminals:as:intermediate}. Consider a type of terminals $\tau$ and $(s,t),(s',t')\in \terminals_\tau\setminus\terminals_\tau^*$ such that the $s$-$t$ path $P$ and the $s'$-$t'$ path $P'$ in $\mathcal{P}$ do not have the same path type.
        We will prove that we can modify only one of $P$, $P'$ so that they both have the same type. Let $Q$ and $Q'$ be the paths we obtain from $P$ and $P'$, respectively, by swapping the vertices $s$ and $s'$, and $t$ and $t'$. Note that $Q$ is an $s'$-$t'$ path that has the same type as $P$, and $Q'$ is an $s$-$t$ path that has the same type as $P'$.
        
        We claim that both $(\mathcal{P}\setminus{P})\cup{Q'}$ and $(\mathcal{P}\setminus{P'})\cup{Q}$ are optimal solutions.
        To see this, let $e_1,\ldots,e_p$ be the edges appearing in $P$ that also appear, in the opposite direction, in other paths of $\mathcal{P}$. Also, let $x_i$, $i \in [p]$, be the number of times the edge $e_i$ appears in a path of $\mathcal{P}$ in the opposite direction to the one it appears in $P$.
        Similarly, let $e'_1,\ldots,e'_q$ be the edges appearing in $P'$ that also appear, in the opposite direction, in other paths of $\mathcal{P}$, and let $x'_j$, $j \in [q]$, be the number of times the edge $e'_j$ appears in a path of $\mathcal{P}$ in the opposite direction to the one it appears in $P'$.
        
        Note that none of the edges $e_i$, $i \in [p]$, or $e'_j$, $j \in [q]$, are incident to $s$, $s'$, $t$, or $t'$. This is because none of these vertices are used as intermediate vertices in any path, and since we have no overlapping terminals, they appear only in their own paths.
        
        By construction of $Q$ and $Q'$, the following two inequalities hold:
        \begin{align*}
        \cost(\mathcal{P})\le \cost((\mathcal{P}\setminus\{P\})\cup\{Q'\}) \le \cost(\mathcal{P})-\sum_{i\in [p]} x_i +\sum_{j \in [q]}x'_j, 
        \end{align*}
        \begin{align*}\cost(\mathcal{P})\le \cost((\mathcal{P}\setminus\{P'\})\cup\{Q\}) \le \cost(\mathcal{P})-\sum_{j\in {q}} x'_j +\sum_{i \in [p]}x_i. 
        \end{align*}
        Therefore, we can conclude that $\sum_{j \in [q]} x'_j = \sum_{i \in [p]} x_i$ and all three solutions are optimal.
        By selecting one of the pairs $(s,t)\in \terminals_\tau\setminus \terminals_\tau^*$, we can repeat the previous process until all other pairs use paths of the same path type as $(s,t)$. This does not affect the path types used by terminals of other terminal types, so we can apply it independently for each terminal type. Thus, the claim holds.
    \end{proofclaim}

    Now, we are ready to provide the algorithm.
    First, we fix the sets $V^*_\nu$ and $\terminals_\tau^*$, for all types of non-terminal vertices and terminal pairs, respectively.
    We consider a terminal type $\tau$. Due to Claim~\ref{claim:identical:paths:at:most:5:per:type} we know that there exists a solution where at most $5$ different types of paths connect terminal pairs of type $\tau$: one for each pair in $\terminals^*\tau$ and one for the rest. Note that, 
    we can guess the types that connect all pairs in FPT time, as the number of types of paths is bounded by a function of $\vc$. 
    
    After guessing the paths, we need to decide which are the intermediate vertices. Any path has at most $\vc+1$ vertices from $I$, and we know the type of all such vertices. Therefore, for each guess, we have at most $5\vc+5$ intermediate vertices that are unknown. For the non-terminal intermediate vertices, we have at most $2$ possible options. Therefore, we can guess them, needing at most $2^{5\vc+5}$ guesses. For the terminal vertices, after we check the type, we need to check which terminal type includes this type of vertex. Since we have at most $4$ terminal pairs for each type and the types of vertices depend on whether the vertex is a source or sink, we only have $4$ possible guesses. Therefore, at most $4^{5\vc+5}$ guesses.
    After these guesses, we have fixed all the paths that connect any pair of type $\tau$.
    
    Since we have $16^{\vc}$ many different types of terminals, we can guess these paths for all different types.
    After the guesses, we have created FPT many feasible solutions. Also, since we have made all possible guesses, we have created at least one optimal solution. Since we can compute the cost of a feasible solution in polynomial time, it suffices to compute the cost of all the feasible solutions we created and keep the smallest.
\end{proof}

\subsection{Overwhelming number of agents}

In this section, we show that, if almost all pairs of vertices appear as terminal pairs, then we can decide if there exists a $0$-cost solution. 

\begin{theorem}
We can decide whether a \ccMAR{} instance $(G,\terminals)$ admits a $0$-cost solution in FPT time parameterized by $|\mathcal{N}|$, where $\mathcal{N} = (V \times V) \setminus \terminals$.
\end{theorem}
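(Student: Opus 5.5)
Our plan is to show that either the instance is trivially a no-instance, or the graph has a number of vertices bounded by a function of $p=|\mathcal{N}|$, after which \Cref{prop:FPT:by:graph:size} (restricted to deciding whether the optimum is $0$) finishes the job. Throughout we assume, as allowed by \Cref{obs:input:simpl}, that cyclic parts have been contracted. We must be careful here: contracting a mixed cycle merges vertices, so a pair $(u,v)$ with $u,v$ in the same cycle becomes a pair $(c,c)$, which is trivially connected. We therefore count missing pairs in the contracted graph. Contraction of a cycle on $q$ vertices only removes pairs from $V\times V$ and from $\terminals$ together, so the number of missing pairs does not increase. Hence we may assume $G$ is acyclic with $|\mathcal{N}|\le p$.

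The key step is the observation that a $0$-cost solution cannot exist when too many ordered pairs are terminal pairs in an acyclic mixed graph. In an acyclic mixed graph there is a topological-like structure. For any two distinct vertices $u,v$, if both $(u,v)$ and $(v,u)$ are in $\terminals$, then the $u$-$v$ path and the $v$-$u$ path together form a closed mixed walk. Since $G$ contains no mixed cycle, these two paths must use some undirected edge in opposite directions, which incurs positive cost. Hence in a $0$-cost instance, for every unordered pair $\{u,v\}$ of distinct vertices, at least one of $(u,v),(v,u)$ lies in $\mathcal{N}$. This gives $\binom{|V|}{2}\le |\mathcal{N}|\le p$, so $|V|=O(\sqrt{p})$. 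The main point to verify carefully is this closed-walk argument. A closed walk that uses no edge in both directions contains a mixed cycle: take a shortest closed subwalk. If every undirected edge is used in a consistent direction, that subwalk is a genuine cycle of length at least $3$, or a directed $2$-cycle of arcs. In either case it is a mixed cycle, which contradicts acyclicity.

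The algorithm is therefore as follows. Contract all mixed cycles and recompute the terminal multiset. If $\binom{|V|}{2}>p$, answer no. Otherwise the graph has $O(\sqrt{p})$ vertices, so we run the algorithm of \Cref{prop:FPT:by:graph:size} in time $p^{O(p^{3/2})}k^{O(1)}$ and check whether the optimum is $0$. One caveat is the definition of $\mathcal{N}$ when $\terminals$ is a multiset. We treat $\terminals$ as a set for the purposes of $\mathcal{N}$, and we ignore the diagonal pairs $(v,v)$, either by excluding them from $V\times V$ or by noting that they only add at most $|V|$ to the count, which changes nothing asymptotically.

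The step we expect to be the main obstacle is the interaction between the contraction and the definition of $\mathcal{N}$. We must make sure the parameter does not grow, and that pairs within a contracted cycle are handled consistently. The cycle-contraction argument also has to be checked against the arc directions inside mixed cycles.
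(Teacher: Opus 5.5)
Your proof is correct, but it takes a different route from the paper's.

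The paper uses only the adjacent-vertex case of your key observation. For an edge $\{u,v\}\in E$ of an acyclic graph, the edge itself is the only $u$--$v$ and the only $v$--$u$ connection. So if both $(u,v)$ and $(v,u)$ lie in $\terminals$, every solution has positive cost. Otherwise each edge contributes a distinct pair to $\mathcal{N}$, so $|E|\le|\mathcal{N}|$. The paper then invokes the algorithm parameterized by $|E|$ (\Cref{thm:FPT:E}), for a running time of $2^{2^{O(|\mathcal{N}|)}}(k|V|)^{O(1)}$.

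Your closed-walk argument extends the observation to every pair of distinct vertices. This bounds the whole vertex set via $\binom{|V|}{2}\le|\mathcal{N}|$ and lets you apply \Cref{prop:FPT:by:graph:size} directly. You get the much better bound $|\mathcal{N}|^{O(|\mathcal{N}|^{3/2})}k^{O(1)}$ and bypass the construction of \Cref{thm:FPT:E}. The price is a slightly less trivial lemma. In its length-two case you should also list an arc $(x,y)$ together with the edge $\{x,y\}$ traversed from $y$ to $x$. That is a mixed $2$-cycle and is equally excluded by acyclicity; the same edge traversed both ways is exactly the positive-cost case.

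You also handle a point the paper passes over: contracting cycles does not increase the parameter. Your conclusion $|\mathcal{N}'|\le|\mathcal{N}|$ is right. Every missing pair of the contracted instance has only missing preimages, and preimages of distinct pairs are disjoint.

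As a side remark, suppose $V\times V$ is read literally, including its diagonal. Then $|\mathcal{N}|\ge|V|$ holds trivially, because $\terminals$ contains no pair $(v,v)$, and the statement follows at once from \Cref{prop:FPT:by:graph:size}. Your argument and the paper's only carry content when the diagonal is excluded.
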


\begin{proof}
Consider an edge $e = \{u, v\} \in E$. We claim that either the cost of any solution is at least $1$, or at least one of $(u, v)$, $(v, u)$ belongs to $\mathcal{N}$.

Assume that neither $(u, v)$ nor $(v, u)$ belongs to $\mathcal{N}$. Then both $(u, v), (v, u) \in \terminals$. Since we can assume that $G$ is acyclic, there exists neither a $u$-$v$ path excluding $\{u, v\}$ nor a $v$-$u$ path excluding $\{u, v\}$. Therefore, both paths must use $\{u, v\}$, and no solution of cost $0$ exists. Note that this condition can be verified in polynomial time.

Hereafter, we assume that for any $e = \{u, v\} \in E$, at least one of $(u, v)$ or $(v, u)$ belongs to $\mathcal{N}$, thus $|\mathcal{N}| \geq |E|$. By Theorem~\ref{thm:FPT:E}, we can compute the optimal solution in FPT time parameterized by $|E| \leq |\mathcal{N}|$. 
\end{proof}

\section{Conclusion}
We introduced the Crossing Cost Multi-Agent Routing (\ccMAR) model, a reasonable framework for asynchronous multi-agent routing where head-on interactions are captured via a soft cost function rather than hard constraints. This setting reflects real-world scenarios where agents operate without global synchronization and where congestion arises from opposing edge traversals.
Our analysis combines game-theoretic and algorithmic perspectives. We showed that Nash equilibrial always exist and are reachable via convergent dynamics, with polynomial-time computability under mild assumptions, and PLS-completeness in the general case. On the optimization side, we first observe that our problem generalizes the Steiner Orientation problem and thus is NP-hard. To overcome this, we developed parameterized algorithms with FPT and XP guarantees for various structural parameters, such as the number of arcs, edges, and terminal pairs, as well as vertex cover number.

This framework opens several directions for future research. In particular, while we provide FPT algorithms for parameters involving the number of arcs and edges, the parameterized complexity with respect to $|A|$ alone remains unresolved. Further exploration of efficient heuristics or approximation schemes for practical instances remains an open and practically relevant challenge.
\bibliographystyle{plain}
\bibliography{ref}
\end{document}